\newtheorem{theorem}{Theorem}
\newtheorem{lemma}[theorem]{Lemma}
\newtheorem{corollary}[theorem]{Corollary}
\DeclareRobustCommand{\bfseries}{%
  \not@math@alphabet\bfseries\mathbf
  \fontseries\bfdefault\selectfont
  \boldmath
}
\newcommand*{\Hyp}{\mathcal{H}}
\newcommand*{\Gyp}{\mathcal{G}}
\newcommand*{\B}{\mathcal B}
\newcommand*{\p}{\mathbf p}
\newcommand*{\nwspace}{\hspace*{.1em}} 
\DeclareMathOperator{\Or}{O}
\DeclareMathOperator{\littleO}{o}
\DeclareMathOperator{\littleOm}{\omega}
\DeclareMathOperator{\Om}{\Omega}
\DeclareMathOperator{\Th}{\Theta}
\DeclareMathOperator{\ld}{ld}
\DeclareMathOperator{\Hb}{H}
\DeclareMathOperator{\Dkl}{D}
\DeclareMathOperator{\Ev}{E}
\DeclareMathOperator{\Pb}{P}
\DeclareMathOperator{\Bin}{Bin}
\DeclareMathOperator{\poly}{poly}
\DeclareMathOperator*{\argmin}{arg\,min}
\DeclareMathOperator*{\argmax}{arg\,max}
\providecommand{\ignore}[1]{}
\begin{document}

\title{%
	{\Large\textbf{The Minimization of Random Hypergraphs}\thanks{%
		An extended abstract was presented at the
		28th European Symposium on Algorithms (ESA 2020)~\cite{Blaesius20MinimizationESA}.
		}
	}\\[.5em]
}

\author[1]{Thomas Bl{\"a}sius}
\author[2]{Tobias Friedrich}
\author[2]{Martin Schirneck}

\affil[1]{Karlsruhe Institute of Technology, Karlsruhe, Germany}
\affil[2]{Hasso Plattner Institute, University of Potsdam, Germany

	\emph{\texttt{firstname.lastname@\{kit.edu\nwspace{|}\nwspace{}hpi.de\}}}}

\date{}

\maketitle

\begin{abstract}
	\noindent
	We investigate the maximum-entropy model $\B_{n,m,p}$ for random $n$-vertex,
	$m$-edge multi-hypergraphs with expected edge size $pn$.
	We show that the expected size of the minimization of $\B_{n,m,p}$,
	i.e., the number of its inclusion-wise minimal edges,
	undergoes a phase transition with respect to $m$.
	If $m$ is at most $1/(1-p)^{(1-p)n}$, then the minimization is of size $\Th(m)$.
	Beyond that point, for $\alpha$ such that $m = 1/(1-p)^{\alpha n}$ 
	and $\Hb$ being the entropy function, it is
	\begin{equation*}
	 	\Th(1) \cdot \min\!\left(1, \, \frac{1}{(\alpha\,{-}\,(1-p))
	 		\nwspace \sqrt{(1\,{-}\,\alpha) \nwspace n}}\right) 
	 		\cdot 2^{(\Hb(\alpha) + (1-\alpha) \log_2 p)  \nwspace n}.
	\end{equation*} 

	\noindent
	This implies that the maximum expected size over all $m$ is $\Th((1+p)^n/\sqrt{n})$.
	
	Our structural findings have algorithmic implications
	for minimizing an input hypergraph,
	which in turn has applications in the profiling of relational databases
	as well as for the Orthogonal Vectors problem studied in fine-grained complexity.
	
	The main technical tool is an improvement of the Chernoff--Hoeffding inequality,
	which we make tight up to constant factors.
	We show that for a binomial variable $X \sim \Bin(n,p)$
	and real number $0 < x \le p$, it holds that
	\begin{equation*}
		\Pb[X \le xn] = \Th(1) \cdot \min\!\left(1, \, \frac{1}{(p-x) \nwspace \sqrt{xn}}\right) \cdot 
			2^{-\!\Dkl(x \,{\|}\, p) \nwspace n},
	\end{equation*}
	
	\noindent
	where $\Dkl$ denotes the Kullback--Leibler divergence between Bernoulli distributions.
	The result remains true if $x$ depends on $n$ as long as it is bounded away from $0$.
\end{abstract}

\ignore{
	We investigate the maximum-entropy model $\mathcal{B}_{n,m,p}$ for random $n$-vertex,
	$m$-edge multi-hypergraphs with expected edge size $pn$.
	We show that the expected size of the minimization of $\mathcal{B}_{n,m,p}$,
	i.e., the number of its inclusion-wise minimal edges,
	undergoes a phase transition with respect to $m$.
	If $m$ is at most $1/(1-p)^{(1-p)n}$, then the minimization is of size $\Theta(m)$.
	Beyond that point, for $\alpha$ such that $m = 1/(1-p)^{\alpha n}$ 
	and $\mathrm{H}$ being the entropy function, it is
	$\Theta(1) \cdot \min\!\left(1, \, \frac{1}{(\alpha\,{-}\,(1-p))
	 	\sqrt{(1\,{-}\,\alpha) n}}\right) 
	 	\cdot 2^{(\mathrm{H}(\alpha) + (1-\alpha) \log_2 p) n}.$
	This implies that the maximum expected size over all $m$ is $\Theta((1+p)^n/\sqrt{n})$.
	
	Our structural findings have algorithmic implications
	for minimizing an input hypergraph,
	which in turn has applications in the profiling of relational databases
	as well as for the Orthogonal Vectors problem studied in fine-grained complexity.
	
	The main technical tool is an improvement of the Chernoff--Hoeffding inequality,
	which we make tight up to constant factors.
	We show that for a binomial variable $X \sim \mathrm{Bin}(n,p)$
	and real number $0 < x \le p$, it holds that
	$\mathrm{P}[X \le xn] = 
		\Theta(1) \cdot \min\!\left(1, \, \frac{1}{(p-x) \sqrt{xn}}\right) \cdot 
		2^{-\!\mathrm{D}(x \,{\|}\, p) n}$,
	where $\mathrm{D}$ denotes the Kullback--Leibler divergence between Bernoulli distributions.
	The result remains true if $x$ depends on $n$ as long as it is bounded away from $0$.
}

\section{Introduction}
\label{sec:intro}

A plethora of work has been dedicated to the analysis of random graphs.
Random hypergraphs, however, received much less attention.
For many types of data, hypergraphs provide a much more natural model.
This is especially true if the data has a hierarchical structure
or reflects interactions between groups of entities.
In non-uniform hypergraphs, where edges can have different numbers of vertices,
a phenomenon occurs that is unknown to graphs:
an edge may be contained in another, with multiple edges even forming chains of inclusion.
We are often only interested in the endpoints of those chains,
namely, the collections of inclusion-wise minimal or maximal edges.
This is the \emph{minimization} or \emph{maximization} of the hypergraph, respectively.

We investigate the maximum-entropy model $\B_{n,m,p}$
for random multi-hypergraphs with $n$ vertices and $m$ edges and expected edge size $pn$
for a constant sampling probability $p$.
In other words, out of all probability distributions on hypergraphs
that result in an expected edge size of $pn$,
$\B_{n,m,p}$ is the one of maximum entropy.
We are interested in the expected size of the minimization/maximization of $\B_{n,m,p}$,
that is, the expected number of its minimal/maximal edges.
Our results are phrased in terms of the minimization,
but replacing the probability $p$ with $1-p$ immediately transfers them to the maximization.
We show that the size of the minimization undergoes a phase transition 
with respect to the number of edges $m$
with the point of transition at $m^* = 1/(1-p)^{(1-p)n}$.
While the number of edges is still small, a constant fraction of them is minimal
and the minimization grows linearly in the total sample sizes.
For $m > m^*$, the size of the minimization is instead governed
by the entropy function of the exponent $\alpha$ such that $m = 1/(1-p)^{\alpha n}$
(see Theorem~\ref{thm:properties_Bnpk} for a precise statement).
We show that the ratio of minimal edges goes down exponentially as $m$ increases.
This seems counter-intuitive at first as it decouples the behavior of the minimization
from the growth of the underlying hypergraph.
We further determine that the maximum expected number of
minimal edges over all $m$ is of order $\Th((1+p)^n/\sqrt{n})$,
the maximum is attained at $m = 1/(1-p)^{\frac{n}{1+p}}$.

Our results establish a close connection between the size of the minimization
and the binomial distribution.\footnote{%
	The notation $\B_{n,m,p}$ is mnemonic of the binomial distribution 
	emerging in the sampling process.
}
Theorem~\ref{thm:binomial_characterization} (see below)
characterizes the number of minimal edges in terms of the total number of edges $m$
and the likelihood that a binomial variable deviates from its expectation.
The main tool in our analysis is the Chernoff--Hoeffding theorem bounding
the tail of the distribution function via the Kullback--Leibler divergence
from information theory.
However, the existing inequalities are not sharp enough
to derive tight statements on the expected size of the minimization.
So far, there is an $\Or(\sqrt{n})$ gap between the best-known upper and lower estimates.
In this work, we improve both sides such that they match up to constant factors.
The improvement transfers directly to the weaker, but more practical,
versions of Chernoff bounds widely used in the literature.
While our findings concentrates on the asymptotics,
we also give an explicit interval for the leading constants involved
and our proofs indicate how large $n$ must be to be applicable.
This makes the result useful also in a non-asymptotic setting
and in areas beyond the treatment of random hypergraphs.
We illustrate this by giving an alternative proof that
the binomial distribution does not vanish outside 
of a constant number of standard deviations from its mean.
Our proof avoids the Berry--Esseen inequality, that is, normal approximation.
We also improve the rate of convergence in Cramér's theorem.

Regarding our main topic of hypergraphs,
our structural insights have algorithmic implications
for the task of actually computing the minimization $\min(\Hyp)$ of an input hypergraph $\Hyp$.
We give two examples from fine-grained complexity as well as data profiling.
There is reason to believe that there exists no minimization algorithm
running in time \mbox{$\Or(m^{2-\varepsilon}) \cdot \poly(n)$} for any $\varepsilon > 0$
on \mbox{$m$-edge}, \mbox{$n$-vertex} hypergraphs.
The argument uses the Sperner Family problem,
which is to decide whether $\Hyp$ contains two edges such that one is contained in the other,
that is, whether $|\min(\Hyp)| < |\Hyp|$.
The latter is equivalent to the, arguably more prominent, Orthogonal Vectors problem
studied in fine-grained complexity.
A truly subquadratic algorithm for the minimization would
thus falsify the Orthogonal Vectors Conjecture 
and in turn the Strong Exponential Time Hypothesis (SETH),
resulting in a major breakthrough in Boolean satisfiability.
On the other hand, partitioning the edges of the hypergraph $\Hyp$ by their cardinality
and processing them in order of increasing cardinality
gives an $\Or(mn \nwspace |\min(\Hyp)|)$-time algorithm,
which is $\Or(m^2 \nwspace n)$ in the worst case.
When looking at the average-case complexity for the $\B_{n,m,p}$ distribution,
we get a run time of $\Or(mn \Ev[ \nwspace |\min(\B_{n,m,p})| \nwspace])$.
Our results therefore show that the algorithm is subquadratic on average
for all $m$ beyond the phase transition,
it is even linear for $m$ larger than $1/(1-p)^n$.

There is also a connection to the profiling of relational databases.
Data scientists regularly need to compile and output a comprehensive list of metadata,
like unique column combinations, functional dependencies, or, most general, denial constraints.
These multi-column dependencies can all be described as the minimal hitting sets
of certain hypergraphs created from comparing pairs of rows in the database and recording 
the sets of attributes in which they differ.
Computing these difference sets one by one
generates an incoming stream of seemingly random subsets.
Filtering the inclusion-wise minimal ones from the stream does not affect the solution,
but can greatly reduce the number of sets
and the complexity of the resulting hitting set instance.
Minimizing the input is therefore a standard preprocessing technique in data profiling.
In real-world databases, there are often fewer minimal difference sets
than rows in the database, let alone pairs thereof.
Therefore, the space needed to store the sets
usually makes up only a small fraction of the original input size.
The upper bounds given in Theorems~\ref{thm:properties_Bnpk}
provide a theoretical explanation for this observation.
We show that only a few difference sets can be expected to be minimal
and their number even shrinks as the database grows larger.
Conversely, the difference sets and the corresponding 
multi-column dependencies are mutually dual,
which allows to recover the minimized input from the
the collection of all solutions.
In this sense, the matching lower bounds in Theorems~\ref{thm:properties_Bnpk} can be seen
as the smallest amount of data any enumeration algorithm needs to process
in order to correctly cover all dependencies.

\vspace*{-.75em}
\subparagraph*{Related Work.}

Erd{\H o}s--R{\'e}nyi graphs $\Gyp_{n,m}$~\cite{Erdos59RandomGraphs}
and Gilbert graphs $\Gyp_{n,p}$~\cite{Gilbert59RandomGraphs}
are arguably the most discussed random graph models in the literature.
We refer the reader to the monograph by Bollob{\'a}s~\cite{Bollobas01RandomGraphs} for an overview.
A majority of the work on these models concentrates on various
phase transitions with respect to the number of edges $m$
or the sample probability $p$, respectively.
This intensive treatment is fueled by the appealing property
that Erd{\H o}s--R{\' e}nyi graphs are ``maximally random''
in that they do not assume anything but the number of vertices and edges.
More formally, among all probability distributions on graphs with $n$ vertices and $m$ edges,
$\Gyp_{n,m}$ is the unique distribution of maximum entropy.
The same holds for $\Gyp_{n,p}$
under the constraint that the expected number of edges is $p \binom{n}{2}$, see~\cite{Anand09Entropy}.

The intuition of being maximally random is captured by the Shannon entropy,
which is the central concept in information theory~\cite{Cover2006InformationTheory,Shannon48Communication}.
A discrete stochastic system
described by the probability distribution $(p_i)_i$
has a (binary) entropy of $H((p_i)_i) = - \sum_i \nwspace p_i \log_2 p_i$.
The self-information of a single state with probability $p$ is ${-}\log_2 p$,
the entropy is the expected information of the whole system.
It is a measure of surprisal or how ``spread out'' the distribution is.
Originally stemming from thermodynamics~\cite{Lieb91PhysicsReport},
the versatility of this definition is key to the successful application
of information theory to fields as diverse as cryptography~\cite{Bruen04Cryptography},
machine learning~\cite{Grandvalet06EntropyRegularization},
quantum computing~\cite{Nielsen10QuantumComputing}, 
and of course network analysis~\cite{Newman10Networks}.
Jaynes' \emph{principle of maximum entropy} states that out of an ensemble of probability distributions
that all describe the observed phenomena equally well,
the one of maximum entropy is to be preferred in order to minimize any outside bias~\cite{Kesavan09Jaynes,Jaynes57MaximumEntropyI,Jaynes57MaximumEntropyII}.
Properties of the maximum-entropy model
then pertain to the average system matching the observations.
In the context of random graphs,
it is used to define so-called null models~\cite{Zweig14Literacy}.
After certain graph statistics observed in real-world networks are fixed,
one chooses the maximum-entropy distribution that meets these constraints.
By comparing the original network with a ``typical'' graph drawn from the null model, 
one can infer whether other properties are correlated with the constraints.
This method was made rigorous by Park and Newman~\cite{Park04StatisticalMechanics}
building on earlier work in general statistics.
Prescribing the exact or expected number of edges
leads to the $\Gyp_{n,m}$ or $\Gyp_{n,p}$ distributions, respectively.
The configuration model fixes the whole degree sequence of the graph~\cite{Bollobas80ConfigurationModel}
and the soft configuration model relaxes these constraints to hold in expectation~\cite{Bianconi07NetworkEnsembles,Garlaschelli08UnbiasedInformation}.

Many early attempts to transfer the concept of null models to hypergraphs have been only indirect
in that they have studied hypergraphs via their clique expansion~\cite{Newman01Collaboration}
or as bipartite graphs~\cite{Saracco15Bipartite}.
This is unsatisfactory since the projections alter relevant observables,
like node degrees or the number of triangles.
Only recently, Chodrow generalized the configuration model directly to multi-hypergraphs~\cite{Chodrow20ConfigurationComplexNetworks},
which has subsequently been refined by Arafat~et~al.~\cite{Arafat20HypergraphsPrescribedDegreeSequenceArXiv}.
There also seems to be not much work on hypergraph models
that can be cast into the maximum-entropy framework without being intentionally designed as such,
a notable exception is the work by Schmidt-Pruzan and Shamir~\cite{SchmidtPruzan85Component}.
They fix the exact (respectively, expected) edge sequence
such that the largest edge has cardinality $\Or(\log n)$
and show a ``double jump'' phase transition in the size of the largest connected component.
Most of the recent literature on random hypergraphs concentrates on the $k$-uniform model
where every edge has exactly $k$ vertices~\cite{Karonski02PhaseTransition,Behrisch010GiantComponent,Behrisch14LocalLimit}
or, equivalently, on random binary matrices with $k$ $1$s per column~\cite{Cooper19RankJournal}.
Our model neither prescribes the exact cardinalities of the edges 
nor a bound on their maximum size,
instead it only requires that the \emph{expected} edge size is $pn$.

Probably closest to our work is a string of articles by Demetrovics et al.~\cite{Demetrovics98RandomDatabases}
as well as Katona~\cite{Katona2012CorrelatedData,Katona13FunctionalConnection}.
They investigate random databases and
connect the R{\' e}nyi entropy of \emph{order~2} of the logarithmic number of rows
with the probability that certain unique column combinations or functional dependencies hold.
In contrast, we connect the Shannon entropy of the logarithmic number of \emph{pairs} of rows
(quantity $\alpha$, see Section~\ref{sec:main_thm})
with the expected number of minimal difference sets.
Unique column combinations, functional dependencies,
and their generalization to denial constraints are dual to the difference sets,
one are the hitting sets of the other~\cite{Abedjan2018DataProfilingBook,Bertossi11DatabaseRepairingBook,
Blaesius16IPEC,Froese16DistinctVectors}.
Also, the Shannon entropy is equal to the R{\' e}nyi entropy of \emph{order~1}~\cite{Cover2006InformationTheory}.
In this sense, we complement the result by Demetrovics~et~al.\
and show that the duality also pertains to the order of entropy.
Furthermore, it has often been observed in practice 
that the collection of minimal difference sets of real-world databases
is much smaller than the original instance size~\cite{Blaesius2019Efficiently,Papenbrock15SevenAlgorithms,
Bleifuss17DenialConstraints,Livshits20ApproximatePVLDB},
our findings provide a theoretical explanation for this phenomenon.

The minimization of hypergraphs also occurs in fine-grained complexity
in form of the Sperner Family problem.
It is subquadratically equivalent to the Orthogonal Vectors problem~\cite{Borassi2016IntoTheSquare,Gao2018CompletenessFO},
which in turn admits a fine-grained reduction from CNF-Satisfiability~\cite{Williams2005SETHtoOV}.
Any truly subquadratic algorithm for computing the minimization of a hypergraph
\emph{in the worst case} would mark a major breakthrough in satisfiability.
Very recently, ideas from fine-grained complexity have been extended to
to the average case~\cite{Ball17AverageCaseFineGrainedHardness,KaneWilliams18OVBranchingPrograms,
Dalirrooyfard20FineGrainedAverageCaseArXiv}.
We show that a simple algorithm for Sperner Family is subquadratic \emph{on average}
on hypergraphs with expected edge size $pn$.

The analysis of random (hyper-)graphs naturally builds
on tools from combinatorics and probability theory.
Conversely, it has always helped to advance those fields by improving on known techniques~\cite{BergeHypergraphs,Bollobas01RandomGraphs,Hofstad16RandomGraphs}.
The binomial distribution is used all throughout science to describe complex systems
emerging from the overlapping effects of many independent choices.
Its typical behavior is well understood,
as described in the central limit theorem
and the strong law of large numbers.
Bounding its tails, however, remains the subject of ongoing research.
The resulting concentration inequalities
play a significant role in the theory of large deviations~\cite{Deuschel84LargeDeviations,Dembo10LargeDeviations},
the analysis of randomized algorithms and search heuristics~\cite{Mitzenmacher17Probablility,Dubhashi09ConcentrationOfMeasure,Doerr20Evolutionary},
as well as computational learning~\cite{Kearns94ComputationalLearning},
to name a few examples.
In this work, we sharpen the tail inequalities of the Chernoff--Hoeffding theorem~\cite{Hoeffding63Inequalities}.
We prominently use a result by Klar~\cite{Klar00TailProbabilities}
on the relation between the distribution function and the probability mass function.
Some refined inequalities have been known before.
By Cram{\'e}r's theorem~\cite{Cramer38SurUn},
Chernoff--Hoeffding is asymptotically tight up to subexponential factors
and the gap has subsequently been reduced to $\Or(\sqrt{n})$, see~\cite{Ash90InformationTheory}.
We close it down to a constant.
There also exist comparatively tight bounds 
based on the normal limit of the binomial distribution.
Contributions by Prokhorov~\cite{Prokhorov53Asymptotic} and later Slud~\cite{Slud77BinomialLaw}
founded major lines of research in that direction.
However, we avoid this approach and give a purely combinatorial argument
since the normal approximation cannot be expressed in terms of elementary functions. 
Also, it tends to place unnecessary restrictions on the sampling probability $p$
for non-asymptotic results.

\vspace*{-.75em}
\subparagraph*{Outline.}

In the next section, we introduce the hypergraph model and state our results in full detail.
We review some notation and general concepts in Section~\ref{sec:prelims}.
Section~\ref{sec:Chernoff--Hoeffding} is dedicated to the Chernoff--Hoeffding theorem.
Section~\ref{sec:distinct_sets_and_minimality} adds further technical contributions
as the foundation of the subsequent proofs.
The main theorems on the size of the minimization
are then proven in Section~\ref{sec:proof_of_main_thm}.
Section~\ref{sec:conclusion} concludes the work.

\section{Model and Main Results}
\label{sec:main_thm}

Fix a probability $p$ and positive integers $n$ and $m$.
The random multi-hypergraph $\B_{n,m,p}$ is generated
by independently sampling $m$ (not necessarily distinct) subsets of $[n]$.
Each set contains any vertex $v \in [n]$
with probability $p$ independently of all other choices.\footnote{%
	In the context of Boolean functions,
	this is called the 
	\emph{$p$-biased distribution}~\cite{ODonnell14BooleanFunctions}.
}

We quickly argue that it is indeed the maximum-entropy model.
Besides the size of the universe $n$ and the number of edges $m$,
the only constraint is the expected edge size $pn$.
The independence bound on the entropy reads as follows.
Let $X_1$ to $X_m$ be random variables with joint distribution
$\Pb_{X_1, \dots, X_m}$ and marginals $\Pb_{X_j}$.
Then, their entropies observe the inequality
$H(\Pb_{X_1, \dots, X_m}) \le \sum_{j = 1}^m H(\Pb_{X_j})$,
with equality holding if and only if the $X_j$ are independent~\cite{Cover2006InformationTheory}.
This implies that the edges need to be sampled independently in order to maximize the entropy,
the same is true for the vertices inside one edge.
Finally, the fact that setting the sampling probability of the vertices to be all equal
indeed gives the maximum entropy under a given mean set size
was proven by Harremo{\"e}s~\cite{Harremoes01MaximumEntropy}.

We are interested in the expected number of inclusion-wise minimal edges in $\B_{n,m,p}$,
denoted by $\Ev[ \nwspace |\min(\B_{n,m,p})| \nwspace ]$.
We describe the asymptotic behavior of this expectation with respect to $n$.
In more detail, we view the number of edges $m = m(n)$ as a function of $n$
and bound the univariate asymptotics of $\Ev[ \nwspace |\min(\B_{n,m,p})| \nwspace ]$ 
with respect to $n$ for any choice of $m$.
The sampling probability $p$ is considered to be a constant throughout.

We show first that the expected size of the minimization can be described precisely
in terms of $m$ and the binomial distribution $\Bin(n,p)$.
To state our result in full detail,
we define
\begin{equation*}
	\alpha = \log_{\frac{1}{(1-p)^n}} m =  - \frac{\log_{1-p} m}{n}.
\end{equation*}

\noindent
The quantity $\alpha$ is a non-negative function of $p$, $n$, and $m$,
it is well-defined for all $0 < p < 1$ and $n,\,m \ge 1$.
Asymptotically in $n$, it is of order $\Th((\log m)/n)$.
If $p$ and $n$ are fixed, choosing a value for $\alpha$ determines $m$
since we can rewrite $m$ as $1/(1-p)^{\alpha n}$.

\begin{figure}[tb]
  \begin{subfigure}[t]{0.49\textwidth}
    \centering
    \includegraphics[page=2]{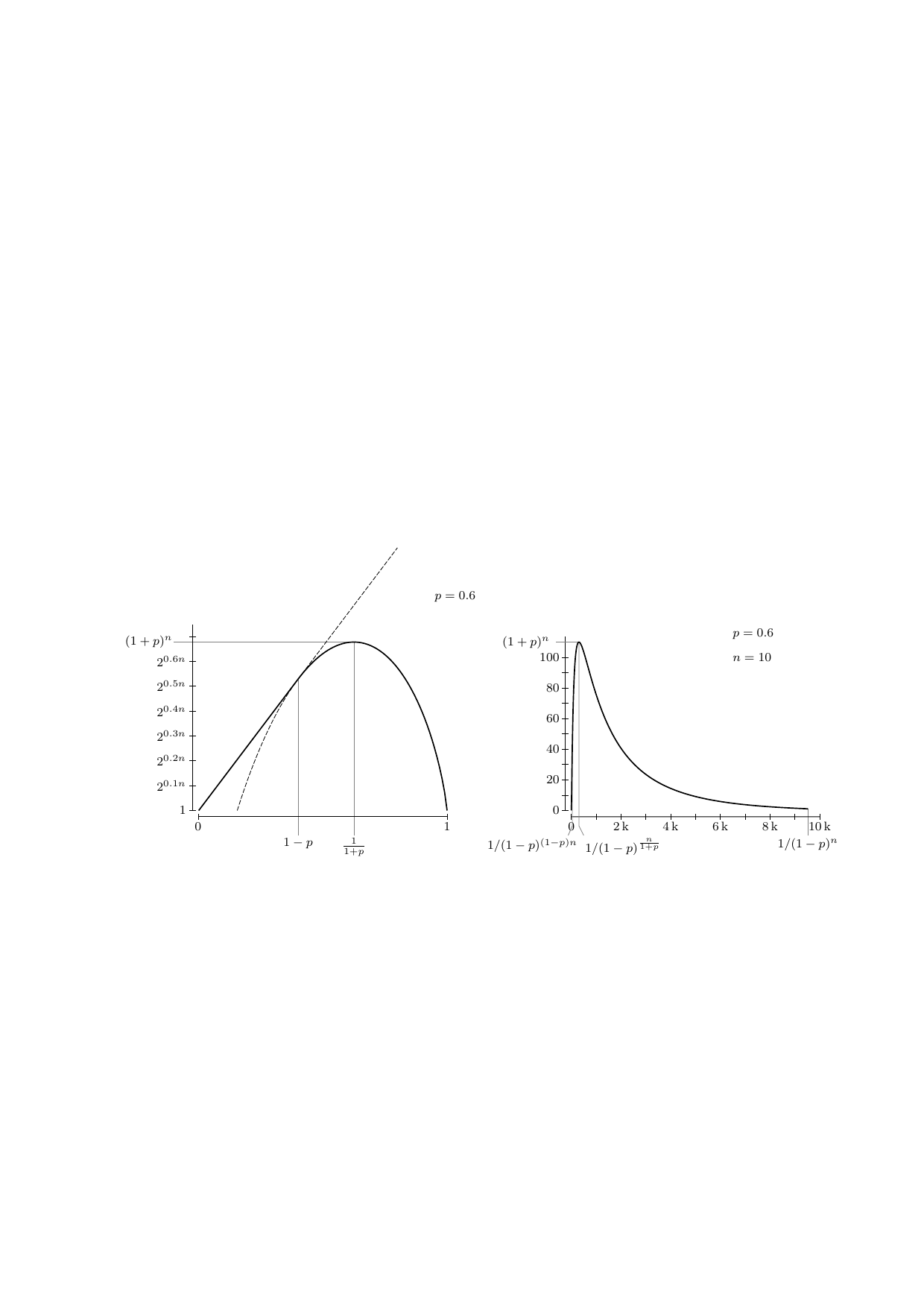}
    \caption{The expected size of the minimization as a function of $m$
      for $n \,{=}\, 10$ and $p \,{=}\, 0.6$ in the information-theoretic regime.
	  The vertical line at
      $m = 1/(1-p)^{\frac{n}{1+p}}$ indicates the position of the maximum
      (Lemma~\ref{lem:maximum_Bnpk}).  For $m > 1/(1-p)^{n}$, the
      size goes to~$1$.
      The linear bound for $m \le 1/(1-p)^{(1-p)n}$ is not shown as it
      is too close to $0$.
    }
    \label{fig:plots-by-m}
  \end{subfigure}\hfill
  \begin{subfigure}[t]{0.49\textwidth}
    \centering
    \includegraphics[page=3]{bounds-plot}
    \caption{The expected size of the minimization as a function of
      $\alpha$ for $p = 0.6$ (the plot is independent of $n$).
      The vertical line at $\alpha = 1-p$  indicates the phase transition
      between the linear and the information-theoretic regime.
      The respective bounds are continued as dashed lines into the other regime.
      The vertical line at $\alpha = 1/(1+p)$ 
      indicates the position of the maximum (Lemma~\ref{lem:maximum_Bnpk}). 
  	}
    \label{fig:plots-by-alpha}
  \end{subfigure}
  \caption{Illustration of Theorem~\ref{thm:properties_Bnpk} showing 
    the expected size of the minimization of a random hypergraph
    depending on the number of edges $m$~(\subref{fig:plots-by-m}) and
    on $\alpha$~(\subref{fig:plots-by-alpha}).  As $\alpha$ grows
    logarithmically in $m$, (\subref{fig:plots-by-alpha}) shows the
    same plot as (\subref{fig:plots-by-m}) but with both axes being
    logarithmic.}
  \label{fig:plots}
\end{figure}

\begin{theorem}
\label{thm:binomial_characterization}
	Let  $n$, $m$ be positive integers and $p$ a probability.
	If $p = 0$ or $p = 1$, then $|\min(\B_{n,m,p})| = 1$ holds deterministically.
	For $0 < p < 1$, let $X \sim \Bin(n,p)$ be a binomially distributed random variable.
	\begin{enumerate}[\textup{(}$i$\textup{)}]
		\item\label{clause:char_thm_almostall}
			For any $\varepsilon > 0$ and all $m \le 1/(1-p)^{(1-\varepsilon) n}$,
			i.e., all $0 \le \alpha \le 1 - \varepsilon$,
			\begin{equation*}
				\Ev[ \nwspace |\min(\B_{n,m,p})| \nwspace ] 
					= \Th(m) \cdot \Pb[ \nwspace X \le (1-\alpha) n \nwspace].
			\end{equation*}
		\item\label{clause:char_thm_trivial_case}
			There exists a $c>0$
			such that for all $m \ge 1/(1-p)^{n+ c \nwspace \ln n}$, i.e., all 
			$\alpha \ge  1 + \frac{c \nwspace \ln n}{n}$,
			\begin{equation*}
				\Ev[ \nwspace |\min(\B_{n,m,p})| \nwspace ] = 1 + \littleO(1).
			\end{equation*}
	\end{enumerate}
\end{theorem}

\noindent
The asymptotic estimate in the first statement is tight up to constants,
the second statement is even tight up to lower-order terms.
The constants hidden in the big-O notation are universal in the sense
that they, of course, do not depend on $m$ or $n$, 
but also not on $\alpha$,
which describes the relation between the former two.
We note that the constants may depend on $p$ and $\varepsilon$.
There is a gap in the theorem at $m = 1/(1-p)^{n}$,
which can, however, be made arbitrarily small.
Let $C = 1/(1-p)$, then Statement~($\ref{clause:char_thm_almostall}$) 
holds if $m \le (C - \delta)^n$ for any constant $\delta > 0$ and
Statement~($\ref{clause:char_thm_trivial_case}$) takes over at $m \ge (C + \littleO(1))^n$.
Unlike what one might expect,
we show (in Lemma~\ref{lem:cannot_extend})
that Theorem~\ref{thm:binomial_characterization}~($\ref{clause:char_thm_almostall}$)
cannot be extended to the case where $\alpha$ converges to $1$.

The characterization of the size of the minimization via the binomial distribution
allows us to give estimates in information-theoretic terms,
namely, using the Shannon entropy of $\alpha$.
This reveals a phase transition at $m^* = 1/(1-p)^{(1-p)n}$, i.e., for $\alpha = 1-p$.
Let $\Hb(x) = - x \log_2 x - (1-x) \log_2 (1-x)$
denote the binary entropy function.

\pagebreak

\wormhole{thm:properties_Bnpk}
\begin{theorem}
\label{thm:properties_Bnpk}
	Let  $n$, $m$ be positive integers and $0 < p < 1$ a non-trivial probability.
	\begin{enumerate}[\textup{(}$i$\textup{)}]
		\item\label{clause:main_thm_linear_range}
			If $m \le 1/(1-p)^{(1-p)n}$,
			then $\Ev[ \nwspace |\min(\B_{n,m,p})| \nwspace] = \Th(m)$.\vspace*{.5em}
		\item\label{clause:main_thm_information-theoretic_range}
			For any $\varepsilon > 0$ and all 
			$1/(1-p)^{(1-p)n} \le m \le 1/(1-p)^{(1-\varepsilon) n}$,
			i.e., all $1-p \le \alpha \le 1 - \varepsilon$,
			\begin{align*}
				\Ev[ \nwspace |\min(\B_{n,m,p})| \nwspace ]
					&= \Th(1) \cdot \min\!\left(1, 
						\, \frac{1}{(\alpha\,{-}\,(1-p)) \nwspace \sqrt{(1\,{-}\,\alpha) n}}\right) 
	 					\cdot 2^{(\Hb(\alpha) + (1-\alpha) \log_2 p)  \nwspace n}\\[.25em]
					&= \Th(1) \cdot \min\!\left(1, 
						\, \frac{1}{(\alpha\,{-}\,(1-p)) \nwspace \sqrt{(1\,{-}\,\alpha) n}}\right) 
						\cdot \left(\frac{p^{1-\alpha}}
						{ (1-\alpha)^{1-\alpha} \ \alpha^\alpha} \right)^n\!.
			\end{align*}
	\end{enumerate}
\end{theorem}

The bounds in the two cases are very different in nature.
They are visualized in Figure~\ref{fig:plots}
showing the expected size of the minimization 
as a function of the number of edges $m$ and of $\alpha$.
To distinguish the cases also in writing, we
use the term \emph{linear regime} if $m$ is between
$1$ and $1/(1-p)^{(1-p)n}$, corresponding to $0 \le \alpha \le 1-p$,
likewise, we refer to $m$ being between
$1/(1-p)^{(1-p) n}$ and $1/(1-p)^n$, i.e., $1-p \le \alpha \le 1$,
as the \emph{information-theoretic regime}.
In Statement~($\ref{clause:main_thm_information-theoretic_range}$) of the above theorem,
the leading constant may depend on the choice of $\varepsilon$.

In order to derive Theorem~\ref{thm:properties_Bnpk} from Theorem~\ref{thm:binomial_characterization},
we tighten the Chernoff--Hoeffding inequality on the binomial distribution function.
Let
$\Dkl(x \,{\|}\, y)
	=  -x \nwspace \log_2\!\left( \frac{y}{x} \right) 
		- (1-x) \nwspace \log_2\!\left(\frac{1-y}{1-x} \right)$
denote the binary Kullback--Leibler divergence.

\wormhole{thm:better_Hoeffding}
\begin{theorem}
\label{thm:better_Hoeffding}
	Let $n$ be a positive integer, $0 < p < 1$ a non-trivial probability, and
	$X  \sim \Bin(n,p)$ a binomial variable.
	Suppose the function $x = x(n)$ takes real values in the interval
	$[\varepsilon, \, 1\,{-}\,\varepsilon]$ for some $\varepsilon > 0$.
	Let $\varphi$ and $\psi$ denote the functions
	\begin{equation*}
		\varphi(n,p,x) = \min\!\left( 1, \,\frac{1}{(p-x) \nwspace \sqrt{xn}} \right)
		\ \, \text{and}\quad
		\psi(n,p,x) = \min\!\left( 1, \,\frac{1}{(x-p) \nwspace \sqrt{(1-x)n}} \right)\!,
	\end{equation*}
	
	\noindent
	with additionally $\varphi(n,p,p) = 1$ and $\psi(n,p,p) = 1$.\vspace*{.75em}

	\noindent
	There exist constants $C_1, C_2, C_3, C_4 > 0$,
	independent of $n$ and $x$ but possibly dependent on $p$ and $\varepsilon$,
	such that the following statements hold for all $n$ sufficiently large.

	\begin{enumerate}[\textup{(}$i$\textup{)}]
		\item If $x \le p$, then
			$C_1 \nwspace \varphi \cdot 2^{-\!\Dkl(x \,{\|}\, p) \nwspace n} \le \Pb[X \le xn]
				\le C_2 \nwspace \varphi \cdot 2^{-\!\Dkl(x \,{\|}\, p) \nwspace n}$.
		\item If $x \ge p$, then
			$C_3 \nwspace \psi \cdot 2^{-\!\Dkl(x \,{\|}\, p) \nwspace n} \le \Pb[X \ge xn]
				\le C_4 \nwspace \psi \cdot 2^{-\!\Dkl(x \,{\|}\, p) \nwspace n}$.
	\end{enumerate}
\end{theorem}

\section{Preliminaries and Notation}
\label{sec:prelims}

\subparagraph*{Multi-Hypergraphs.}

A \emph{hypergraph} on the vertex set $[n] = \{1, \dots, n \}$ is a set of subsets
$\Hyp \,{\subseteq}\, \mathcal{P}([n])$, called the \mbox{\emph{(hyper-)edges.}}
If $\Hyp$ is a multiset instead, we have a \emph{multi-hypergraph}.
We do not allow multiple copies of the same vertex in one edge.
The \emph{minimization} of a hypergraph $\Hyp$ 
is the collection of its inclusion-wise minimal edges,
$\min(\Hyp) = \{ E \in \Hyp \mid \forall E' \in \Hyp \colon E' \subseteq E \Rightarrow E' = E\}$.
We extend this notion to multi-hypergraphs by requiring
that, whenever a minimal edge has multiple copies,
only one of them is included in the minimization.
This way $\min(\Hyp)$ is always a mere hypergraph (a set).
For a multi-hypergraph $\Hyp$, we use $|\Hyp|$ to denote
the total number of edges counting multiplicities,
and $\|\Hyp\|$ for the number of distinct edges,
that is, the cardinality of the support of $\Hyp$.
Evidently, we have $|\min(\Hyp)| \le \|\Hyp\| \le |\Hyp|$.

\vspace*{-.75em}
\subparagraph*{Information Theory.}

We intend the expressions $0 \nwspace\log_a 0$ and $0 \nwspace\log_a (\frac{0}{0})$
to both mean $0$ for any positive real base $a$.
Note that this convention implies $0^0 = a^{0 \log_a 0} = 1$
and $(\frac{0}{0})^0 = 1$.
We use $\ld x$ for the binary (base-$2$) logarithm of $x$.
The (binary) \emph{entropy function} $\Hb$ is defined for all probabilities $x$ as
\begin{equation*}
	\Hb(x) = - x \nwspace \ld x - (1-x) \nwspace \ld(1-x).
\end{equation*}

\noindent
The entropy function is the Shannon entropy (equivalently, the R{\'e}nyi entropy of order $1$)
of the Bernoulli distribution with parameter $x$.
In the notation of the previous sections, we have $\Hb(x) = H((x, 1-x))$.
The entropy function is symmetric around $\sfrac{1}{2}$ with $\Hb(x) = \Hb(1-x)$.
On the open unit interval, $\Hb$ is positive and 
strictly concave, it has its maximum at $\sfrac{1}{2}$ with value $\Hb(\sfrac{1}{2}) = 1$.
The entropy power
\begin{equation*}
	2^{\Hb(x)} = \frac{1}{x^x \nwspace (1-x)^{1-x}}
\end{equation*}

\noindent
is called the \emph{perplexity} of $x$.
We use it to estimate binomial coefficients, see~\cite{Cover2006InformationTheory}.

\begin{lemma}
\label{lem:lowerbound_binCoeff}
	Let $n$ be a positive integer and $0 < x <1 $ a rational such that $xn$ is an integer, then
	\begin{equation*}
		\frac{2^{\Hb(x) n}}{\sqrt{8 \nwspace n \nwspace x (1-x)}}
			\le \binom{n}{xn}
			\le \frac{2^{\Hb(x) n}}{\sqrt{\pi \nwspace n \nwspace x (1-x)}}.
	\end{equation*}
\end{lemma}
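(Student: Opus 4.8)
The plan is to derive both inequalities from Stirling's formula with explicit two-sided error terms; both estimates are classical (they can, e.g., be read off from Cover and Thomas~\cite{Cover2006InformationTheory}), so the only real task is to pin down the universal constants $\sqrt{\pi}$ and $\sqrt{8}$ that are used later. Concretely, I would use Robbins' form $\sqrt{2\pi k}\,(k/e)^k\,e^{1/(12k+1)} < k! < \sqrt{2\pi k}\,(k/e)^k\,e^{1/(12k)}$, written as $k! = \sqrt{2\pi k}\,(k/e)^k\,e^{\lambda_k}$ with $0 < \lambda_k < 1/(12k)$ for every integer $k \ge 1$. Setting $a = xn$ and $b = (1-x)n$, which are positive integers with $a+b = n$ by hypothesis, and expanding $\binom{n}{xn} = n!/(a!\,b!)$, the powers $(\cdot/e)^{(\cdot)}$ collapse via the elementary identity $n^n/(a^a b^b) = x^{-xn}(1-x)^{-(1-x)n} = 2^{\Hb(x)n}$, while the square-root prefactors combine to $\sqrt{2\pi n}/\bigl(\sqrt{2\pi a}\,\sqrt{2\pi b}\bigr) = 1/\sqrt{2\pi n x(1-x)}$. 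This gives the exact identity
\begin{equation*}
	\binom{n}{xn} = \frac{2^{\Hb(x)n}}{\sqrt{2\pi n x(1-x)}}\; e^{\theta}, \qquad \theta := \lambda_n - \lambda_{xn} - \lambda_{(1-x)n},
\end{equation*}
so that the whole statement reduces to bounding the single error $\theta$, which leaves roughly a factor $\sqrt{2}$ of slack on either side.

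For the upper bound I would discard the two negative terms: $\theta < \lambda_n < 1/(12n) \le 1/12$, hence $\binom{n}{xn} < e^{1/12}\cdot 2^{\Hb(x)n}/\sqrt{2\pi n x(1-x)}$, and it remains only to note $e^{1/12} < \sqrt{2}$, which converts $\sqrt{2\pi}$ into $\sqrt{\pi}$; this is uniform in $n$ and $x$. For the lower bound I would instead discard the positive term and estimate $\theta > -\lambda_{xn} - \lambda_{(1-x)n} > -\frac{1}{12n}\bigl(\tfrac1x + \tfrac1{1-x}\bigr)$. Using the $x \leftrightarrow 1-x$ symmetry of the entire statement we may assume $x \le 1/2$; over the admissible values $x = k/n$ the quantity $\tfrac1x + \tfrac1{1-x}$ is maximal at $x = 1/n$, where it equals $n^2/(n-1)$, so $\theta > -\frac{n}{12(n-1)}$. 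The target inequality $\binom{n}{xn} \ge 2^{\Hb(x)n}/\sqrt{8\,n x(1-x)}$ then follows as soon as $e^{-n/(12(n-1))} \ge \sqrt{2\pi}/\sqrt{8} = \sqrt{\pi}/2$, i.e.\ as soon as $\frac{n}{12(n-1)} \le \ln(2/\sqrt{\pi}) \approx 0.1208$, which holds for every $n \ge 4$.

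The main obstacle is the handful of small cases: the bookkeeping above fails for $n \in \{2,3\}$, and it must fail, because the lower bound is in fact \emph{tight} at $n = 2$, $x = 1/2$ (there $\binom21 = 2 = 4/\sqrt{8\cdot 2\cdot\tfrac14}$), so no slack in any Stirling estimate can absorb those cases. By the symmetry already used, the only cases left are $(n,x) \in \{(2,\tfrac12),(3,\tfrac13)\}$, which I would settle by direct evaluation (for the second, $\binom31 = 3 \ge \tfrac{27}{4}/\sqrt{16/3}$). One could alternatively substitute the exact values of $\lambda_2$ and $\lambda_3$ into the estimate above, but the explicit check is shorter and makes the tightness at $n=2$ transparent.
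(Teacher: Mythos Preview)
Your proof is correct. The paper itself does not supply a proof of this lemma at all: it states the bounds as ``well-known in the literature'' and refers to Cover and Thomas~\cite{Cover2006InformationTheory}, so there is no in-paper argument to compare against. Your approach via Robbins' form of Stirling is precisely the standard way to obtain these inequalities with the explicit constants $\sqrt{\pi}$ and $\sqrt{8}$; the identification of the tight case $(n,x)=(2,\tfrac12)$ and the direct check for $(3,\tfrac13)$ are the right way to close the gap left by the asymptotic estimate, and the numerical verifications you give are accurate.
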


For any two probabilities $x$ and $y$,
the (binary) \emph{Kullback--Leibler divergence}\footnote{%
	The divergence is sometimes called \emph{relative entropy},
	we avoid this term due to ambiguities, see~\cite{Cover2006InformationTheory}.}
between Bernoulli distributions with respective parameters $x$ and $y$ is
\begin{equation*}
	\Dkl(x \,{\|}\, y)
		=  -x \nwspace \ld\!\left( \frac{y}{x} \right) - (1-x) \nwspace \ld\!\left(\frac{1-y}{1-x} \right)\!.
\end{equation*}

\noindent
The function $\Dkl(x \,{\|}\, y)$ is convex in both $x$ and $y$,
and attains its minimum $0$ for $x = y$.
The divergence observes $\Dkl(x \,{\|}\, y) = \Dkl(1 - x \,{\|}\, 1-y)$.
Its partial derivative with respect to $x$ is
\begin{equation*}
	\frac{\partial}{\partial x} \Dkl(x \,{\|}\, y) 
		= \ld \!\left( \frac{x}{1-x}\, \frac{1-y}{y} \right)\!.
\end{equation*}

\noindent
Next, we establish the fact that the divergence scales quadratically
in the difference $y - x$. 

\begin{lemma}
\label{lema:div_quadratic}
	Let $0 < x \le y < 1$ be two non-trivial probabilities.
	Define $t^+$ to be the maximizer of $t(1-t)$ over the interval $[x,y]$,
	and $t^-$ the minimizer.
	Then, it holds that
	\begin{equation*}
		\frac{(y-x)^2}{t^+(1-t^+)}
			\le 2\ln(2) \cdot \Dkl(x \,{\|}\, y) \le
			\frac{(y-x)^2}{t^-(1-t^-)}.
	\end{equation*}
\end{lemma}

\begin{proof}
	Let $\varepsilon = y - x$.
	$\Dkl(y - \varepsilon \,{\|}\, y)$ is two-times differentiable with respect to $\varepsilon$
	with derivatives
	\begin{equation*}
		\frac{\partial}{\partial \varepsilon} \Dkl(y - \varepsilon \,{\|}\, y)
			= \ld\!\left(\frac{y}{1-y} \nwspace \frac{1-y+\varepsilon}{y-\varepsilon}\right)
		\ \text{and} \ \ \
		\frac{\partial^2}{\partial \varepsilon^2} \Dkl(y - \varepsilon \,{\|}\, y)
			= \frac{1}{\ln 2} \, \frac{1}{(y-\varepsilon)(1-y+\varepsilon)}.
	\end{equation*}
	
	\noindent
	Both the divergence itself as well as its first derivative vanish at $\varepsilon = 0$.
	By Taylor's theorem (with the Lagrange form of the remainder),
	there exists an $\xi$ with $0 \le \xi \le \varepsilon$ such that
	\begin{equation*}
		\Dkl(y - \varepsilon \,{\|}\, y)
			= \frac{\varepsilon^2}{2!} \cdot \left. \frac{\partial^2}{\partial \varepsilon^2}
				\Dkl(y - \varepsilon \,{\|}\, y) \nwspace \right|_{\varepsilon = \xi}
			= \frac{\varepsilon^2}{2 \ln 2 \, (y-\xi)(1-y+\xi)}.
	\end{equation*}	
	
	\noindent
	The lemma follows from $y-\xi$ ranging over $[x,y]$.
\end{proof}

We often use the following quantity derived from the divergence, resembling the perplexity.
\begin{equation*}
	2^{-\!\Dkl(x \,{\|}\, y)} = 2^{\Hb(x)} \cdot y^x (1-y)^{1-x} = \left(\frac{y}{x}\right)^x \left(\frac{1-y}{1-x}\right)^{1-x}\!\!.
\end{equation*}

\noindent
The next lemma is useful when relating quantities of this kind for different parameters.

\begin{lemma}
\label{lem:monotone_div}
	Let $0 \le x \le y \le z \le 1$ be three probabilities, then
	\begin{equation*}
		2^{-\!\Dkl(x \,{\|}\, z)} = 
			\left(\frac{y}{1-y} \, \frac{1-z}{z}\right)^{y-x}\!
				2^{-\!\Dkl(x \,{\|}\, y)} \cdot 2^{-\!\Dkl(y \,{\|}\, z)}.
	\end{equation*}
	
	\noindent
	In particular, for any fixed $z$,
	$2^{-\!\Dkl(x \,{\|}\, z)}$ is non-decreasing in $x$ as long as $x \le z$.
\end{lemma}

\begin{proof}
	The convention $(\frac{0}{0})^0 = 1$
	ensures that $(\frac{y}{1-y} \frac{1-z}{z})^{y-x}$ is well-defined even for $z=0$.
	\begin{align*}
		\frac{2^{-\!\Dkl(x \,{\|}\, z)}}{2^{-\!\Dkl(y \,{\|}\, z)}}
			&= \frac{\left(\frac{z}{x}\right)^x \left(\frac{1-z}{1-x}\right)^{1-x}}
				{\left(\frac{z}{y}\right)^y \left(\frac{1-z}{1-y}\right)^{1-y}}
			 = \frac{\left(\frac{z}{x}\right)^x \left(\frac{1-z}{1-x}\right)^{1-x}}
				{\left(\frac{z}{y}\right)^{y-x} \left(\frac{z}{y}\right)^x
					\left(\frac{1-z}{1-y}\right)^{1-x} \left(\frac{1-z}{1-y}\right)^{x-y}}\\[.5em]
			&= \frac{1}{\left(\frac{z}{y}\right)^{y-x} \left(\frac{1-z}{1-y}\right)^{x-y}} 
				\left(\frac{y}{x}\right)^x \left(\frac{1-y}{1-x}\right)^{1-x}
			 = \left(\frac{y}{1-y} \, \frac{1-z}{z}\right)^{y-x}
			 	2^{-\!\Dkl(x \,{\|}\, y)}.
	\end{align*}
	
	\noindent
	The monotonicity follows from the last two factors being at most $1$
	due to $x \le y \le z$.
\end{proof}

\vspace*{-1em}
\subparagraph*{Polynomials of Probabilities.}

We regularly estimate expressions of the form $(1-x)^n$
where $x$ is a probability.
The first inequality is taken from~\cite{Motwani1995RandAlg}.

\begin{lemma}
\label{lem:Motwani_and_Raghavan_bound}
	Let $n$ be a positive integer and $x$ a real number such that $|x| \le n$, then
	\begin{equation*}
		e^{x} \left(1 - \frac{x^2}{n}\right) \le \left( 1 + \frac{x}{n}\right)^n\!.
	\end{equation*}
\end{lemma}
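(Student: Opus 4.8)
The plan is to avoid calculus entirely and instead reduce everything to two elementary facts — Bernoulli's inequality and the universal bound $1 + u \le e^u$ — after splitting on the sign of the factor $1 - x^2/n$. First I would observe that the hypothesis $|x| \le n$ already forces $1 + x/n \ge 0$, so the right-hand side $(1 + x/n)^n$ is non-negative. Hence whenever $1 - x^2/n \le 0$, i.e.\ $x^2 \ge n$, the left-hand side $e^x(1 - x^2/n)$ is non-positive while the right-hand side is non-negative, and the inequality holds trivially. This settles every case except the range $x^2 < n$, equivalently $|x| < \sqrt n$, in which $1 - x^2/n > 0$ and (since $|x| < \sqrt n \le n$ for every positive integer $n$) also $1 - x/n > 0$.

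In that remaining range I would exploit the algebraic identity
\begin{equation*}
    \left(1 + \frac{x}{n}\right)^{n} \left(1 - \frac{x}{n}\right)^{n} = \left(1 - \frac{x^2}{n^2}\right)^{n}.
\end{equation*}
Because $|x| \le n$, the quantity $-x^2/n^2$ lies in $[-1,0]$, so Bernoulli's inequality $(1+a)^n \ge 1 + na$ applies to the right-hand factor and gives $(1 - x^2/n^2)^n \ge 1 - x^2/n$. Separately, from $1 + u \le e^u$ with $u = -x/n$, and using that $1 - x/n > 0$ in this range, I raise both non-negative sides to the $n$-th power to obtain $(1 - x/n)^n \le e^{-x}$, and therefore $1/(1 - x/n)^n \ge e^x$.

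Combining these, in the case $x^2 < n$ the target follows from the chain
\begin{equation*}
    \left(1 + \frac{x}{n}\right)^{n}
    = \frac{\left(1 - x^2/n^2\right)^n}{\left(1 - x/n\right)^n}
    \ge \frac{1 - x^2/n}{\left(1 - x/n\right)^n}
    \ge e^{x}\left(1 - \frac{x^2}{n}\right),
\end{equation*}
where the first inequality is Bernoulli applied to the numerator and the second multiplies the bound $1/(1-x/n)^n \ge e^x$ by the positive quantity $1 - x^2/n$. No individual step is difficult; the only real obstacle is the bookkeeping of signs. The division by $(1 - x/n)^n$ is legitimate only because this factor is strictly positive in the non-trivial range, and the final multiplication preserves the direction of the inequality only because $1 - x^2/n > 0$ there — which is precisely what the opening case distinction guarantees. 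I would therefore double-check that the boundary inputs $x = \pm n$ (and the degenerate $n = 1$) fall into the trivial branch rather than producing a $0/0$ in the factorization, after which the proof is complete.
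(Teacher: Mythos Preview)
Your argument is correct: the case split on the sign of $1 - x^2/n$ dispatches the boundary range where the left-hand side is non-positive, and in the remaining range $|x| < \sqrt{n}$ the factorization $(1+x/n)^n(1-x/n)^n = (1-x^2/n^2)^n$ combined with Bernoulli and $1+u \le e^u$ goes through cleanly, all sign conditions being guaranteed by the case split.

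The paper itself does not prove this lemma at all; it simply quotes it from the textbook of Motwani and Raghavan, so there is no in-paper argument to compare against. Your self-contained derivation, relying only on Bernoulli's inequality and the exponential bound, is therefore a genuine addition rather than a paraphrase, and it has the advantage of being entirely elementary and easy to verify line by line.
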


\noindent
We reach rather tight bounds on $(1-x)^n$ by substituting $x$ for $-nx$ above,
and combining this with the simple fact that $(1+x) \le e^{x}$ holds for all $x$.

\begin{corollary}
\label{cor:tight_bound_polynom}
	Let $n$ be a non-negative integer and $x$ a probability, then
	\begin{equation*}
		e^{-nx} \left( 1 - n \nwspace x^2\right) \le (1 - x)^n \le e^{-nx}\!.
	\end{equation*}
\end{corollary}

\noindent
The next inequalitiy was given by Badkobeh, Lehre, and Sudholt~\cite{Badkobeh:2015}.

\begin{lemma}[Lemma~10 in~\cite{Badkobeh:2015}]
\label{lem:Badkobeh_bound}
	Let $n$ be a non-negative integer and $x$ a probability, then
	\begin{equation*}
		\frac{nx}{1+nx} \le 1-(1-x)^n \le nx.
	\end{equation*}
\end{lemma}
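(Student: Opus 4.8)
The plan is to prove the two inequalities separately, since each reduces to a single well-known exponential estimate. I first rewrite both bounds in the cleaner equivalent form involving $(1-x)^n$ directly. The upper bound $1-(1-x)^n \le nx$ is equivalent to $(1-x)^n \ge 1-nx$, while the lower bound $\frac{nx}{1+nx} \le 1-(1-x)^n$ is equivalent to $(1-x)^n \le \frac{1}{1+nx}$, because $1 - \frac{nx}{1+nx} = \frac{1}{1+nx}$. Establishing these two reformulations suffices.

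For the upper bound I would invoke Bernoulli's inequality: for every non-negative integer $n$ and every real $t \ge -1$ one has $(1+t)^n \ge 1+nt$. Setting $t = -x$ is legitimate because $x$ is a probability, so $-1 \le -x \le 0$, and this yields $(1-x)^n \ge 1 - nx$ immediately. Alternatively, one can argue probabilistically: $1-(1-x)^n$ is the probability that at least one of $n$ independent $\mathrm{Bernoulli}(x)$ trials succeeds, which the union bound caps at $nx$.

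For the lower bound I would chain two elementary exponential estimates. Corollary~\ref{cor:tight_bound_polynom} already supplies $(1-x)^n \le e^{-nx}$. Applying the inequality $1+t \le e^{t}$ (stated in the text immediately preceding that corollary) with $t = nx$ gives $1+nx \le e^{nx}$, hence $e^{-nx} \le \frac{1}{1+nx}$. Combining the two estimates yields $(1-x)^n \le e^{-nx} \le \frac{1}{1+nx}$, which is exactly the reformulated lower bound.

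It then remains only to confirm that the boundary values $n=0$, $x=0$, and $x=1$ are covered; in each case every term collapses so that the chain reads $0 \le 0 \le 0$ or $\frac{n}{1+n} \le 1 \le n$, both of which hold. I do not anticipate any genuine obstacle: the only point requiring care is that the hypotheses of Bernoulli's inequality and of $1+t \le e^{t}$ are satisfied on the full parameter range $0 \le x \le 1$ and $n \ge 0$, which they are, so no case analysis beyond these degenerate endpoints is needed.
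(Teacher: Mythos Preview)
Your argument is correct. The paper itself does not prove this lemma; it merely cites it as Lemma~10 of Badkobeh, Lehre, and Sudholt~\cite{Badkobeh:2015} and moves on. Your derivation of the upper bound via Bernoulli's inequality and of the lower bound via the chain $(1-x)^n \le e^{-nx} \le 1/(1+nx)$ is sound on the full range $0 \le x \le 1$, $n \ge 0$, and the endpoint checks are fine.
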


We prepare the following lemma on conditional probabilities of series of events for later.

\begin{lemma}
\label{lem:znZN}
	Consider a random experiment with three outcomes $A$, $B$, and $C$ with $\Pb[ B ] > 0$.
	In a series of $m$ i.i.d.\ trials, let $A_j$ denote the event
	that the outcome of the $j$-th trial is $A$, same with $B$.	
	Then, it holds that
	$\Pb[\nwspace \forall j \le m \colon \neg A_j \mid \exists k \le m \colon B_k \nwspace]
	\le \Pb[\nwspace \forall j \le m \colon \neg A_j \mid B_m \nwspace]$.
\end{lemma}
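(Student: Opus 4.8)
The plan is to expose the asymmetry between the two conditioning events by conditioning further on the \emph{last index} at which $B$ occurs. Write $q = \Pb[B] > 0$ and, for $1 \le k \le m$, let $L_k$ be the event that $B_k$ holds but $B_{k+1}, \dots, B_m$ all fail; these events are disjoint and their union is exactly $\{\exists k \le m \colon B_k\}$. Conditioned on $L_k$, the trials $1, \dots, k-1$ are still i.i.d.\ copies of the original experiment (the conditioning only constrains trials $k$ through $m$), trial $k$ has outcome $B$ (so in particular $\neg A_k$ holds), and trials $k+1, \dots, m$ are i.i.d.\ copies conditioned on $\neg B$. Hence the event $\{\forall j \le m \colon \neg A_j\}$ splits as a product over these three blocks, and
\begin{equation*}
	\Pb[\nwspace \forall j \le m \colon \neg A_j \mid L_k \nwspace]
		= \Pb[\neg A]^{\,k-1} \cdot 1 \cdot \Pb[\neg A \mid \neg B]^{\,m-k}.
\end{equation*}

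Next I would observe that $\Pb[\neg A] \le \Pb[\neg A \mid \neg B]$: indeed, since $A$ and $B$ are incompatible outcomes of a single trial, $A \subseteq \neg B$, so $\Pb[A \mid \neg B] = \Pb[A]/\Pb[\neg B] \ge \Pb[A]$, and passing to complements gives the claim. Therefore, replacing each of the $k-1$ factors $\Pb[\neg A]$ by the larger $\Pb[\neg A \mid \neg B]$ yields
\begin{equation*}
	\Pb[\nwspace \forall j \le m \colon \neg A_j \mid L_k \nwspace]
		\le \Pb[\neg A \mid \neg B]^{\,m-1}
		= \Pb[\nwspace \forall j \le m \colon \neg A_j \mid B_m \nwspace],
\end{equation*}
where the last equality uses the same block decomposition for the event $B_m$ (which is $L_m$): trials $1, \dots, m-1$ are i.i.d.\ copies conditioned on nothing\footnote{More precisely, $B_m$ does not constrain the first $m-1$ trials, so they are unconditioned; but since we then need $\neg A_j$ for $j < m$ and want the bound, it is cleanest to keep them unconditioned and note $\Pb[\neg A] \le \Pb[\neg A \mid \neg B]$ again — alternatively condition $B_m$ through the $L_k$ formula with $k = m$, giving exactly $\Pb[\neg A \mid \neg B]^{m-1}$ after the same substitution.} and trial $m$ contributes the factor $1$. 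Finally, I would combine the pieces by the law of total probability: since $\Pb[\nwspace \forall j \colon \neg A_j \mid \exists k \colon B_k \nwspace]$ is a convex combination (with weights $\Pb[L_k \mid \exists k \colon B_k]$) of the quantities $\Pb[\nwspace \forall j \colon \neg A_j \mid L_k \nwspace]$, and each of these is at most the common upper bound $\Pb[\nwspace \forall j \le m \colon \neg A_j \mid B_m \nwspace]$, the convex combination is bounded by the same value, which is the assertion.

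The main technical care — rather than a genuine obstacle — is justifying the block factorization cleanly: one must argue that conditioning on $L_k$ leaves trials $1, \dots, k-1$ distributed as independent unconditioned copies and trials $k+1, \dots, m$ as independent copies conditioned on $\neg B$, with trial $k$ frozen to outcome $B$. This is immediate from the i.i.d.\ assumption once $L_k$ is written as an intersection of single-trial events, but it is the step where the ``last occurrence of $B$'' device does its work, turning an unwieldy conditioning on a union into a product form to which the elementary inequality $\Pb[\neg A] \le \Pb[\neg A \mid \neg B]$ applies termwise. Everything else is bookkeeping with finite geometric-type products and a convexity/averaging argument.
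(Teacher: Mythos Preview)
Your decomposition by the last index at which $B$ occurs is a clean and genuinely different route from the paper's inductive proof, and the convex-combination finish is correct. However, the central inequality is stated with the wrong sign. From $A \subseteq \neg B$ you correctly get $\Pb[A \mid \neg B] = \Pb[A]/\Pb[\neg B] \ge \Pb[A]$, but ``passing to complements'' \emph{reverses} the inequality:
\[
	\Pb[\neg A \mid \neg B] = 1 - \Pb[A \mid \neg B] \le 1 - \Pb[A] = \Pb[\neg A].
\]
So $\Pb[\neg A \mid \neg B]$ is the \emph{smaller} of the two, not the larger. Consequently, you should be replacing the $m-k$ factors $\Pb[\neg A \mid \neg B]$ by the larger $\Pb[\neg A]$, not the other way around. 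This gives
\[
	\Pb[\nwspace \forall j \le m \colon \neg A_j \mid L_k \nwspace]
		= \Pb[\neg A]^{\,k-1}\,\Pb[\neg A \mid \neg B]^{\,m-k}
		\le \Pb[\neg A]^{\,m-1}
		= \Pb[\nwspace \forall j \le m \colon \neg A_j \mid B_m \nwspace],
\]
where the last equality is exactly your own observation that conditioning on $B_m$ leaves trials $1,\dots,m-1$ unconditioned (and $B_m = L_m$). Your displayed equation $\Pb[\neg A \mid \neg B]^{m-1} = \Pb[\nwspace \forall j \le m \colon \neg A_j \mid B_m \nwspace]$ is therefore also wrong; the right-hand side is $\Pb[\neg A]^{m-1}$, as your footnote half-acknowledges.

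With this correction the argument goes through, and it is arguably more transparent than the paper's proof, which proceeds by induction on $m$ after first rewriting the claim as
\[
	\Pb[\nwspace \forall j \le m \colon \neg A_j \mid \neg B_m \wedge \exists k < m \colon B_k \nwspace]
	\le \Pb[\nwspace \forall j \le m \colon \neg A_j \mid B_m \nwspace]
\]
via an elementary fraction inequality. Your last-occurrence decomposition avoids the induction entirely and makes the role of the key single-trial inequality $\Pb[\neg A \mid \neg B] \le \Pb[\neg A]$ explicit.
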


\begin{proof}
	The case $\Pb[B] = 1$ is trivial, thus assume $0 < \Pb[B] < 1$.
	The assertion in the lemma is equivalent to
	\begin{equation}
	\label{eq:equivalent_probabilities}
		\Pb[\nwspace \forall j \le m \colon \neg A_j
			\mid \neg B_m \wedge (\exists k < m \colon B_k) \nwspace]
		\le \Pb[\nwspace \forall j \le m  \colon \neg A_j \mid B_m \nwspace].
	\end{equation}

	\noindent	
	To see this, observe that for any four reals $x$, $y$, $z$, $w$
	such that $y$, $w$, and $y+w$ are all non-zero,
	$\frac{x+z}{y+w} \le \frac{z}{w}$ holds if and only if $\frac{x}{y} \le \frac{z}{w}$ does.
	The event $[\nwspace \exists k \le m \colon B_k \nwspace]$
	can be partitioned into
	$[ \nwspace \neg B_m \wedge (\exists k < m \colon B_k) \nwspace]$
	and $[ B_m ]$, giving
	\begin{multline*}
		\Pb[ \nwspace \forall j \le m \colon \neg A_j
			\mid \exists k \le m  \colon B_k \nwspace]
			= \frac{\Pb[ \nwspace (\forall j \le m  \colon \neg A_j) 
				\wedge (\exists k \le m  \colon B_k) \nwspace ]}
				{\Pb[\nwspace \exists k \le m  \colon B_k \nwspace]}\\
			= \frac{\Pb[ \nwspace (\forall j \le m  \colon \neg A_j)
					\wedge \neg B_m \wedge (\exists k < m \colon B_k) \nwspace ]
				+ \Pb[ \nwspace (\forall j \le m  \colon \neg A_j) \wedge B_m \nwspace ]}
				{\Pb[ \nwspace \neg B_m \wedge (\exists k < m \colon B_k) \nwspace ] 
					+ \Pb[\nwspace B_m \nwspace]}\nwspace.
	\end{multline*}
	
	\noindent
	Applying the observation to the real numbers
	$x = \Pb[ \nwspace (\forall j \le m \colon \neg A_j) \wedge \neg B_m 
		\wedge (\exists k < m \colon B_k) \nwspace ]$,
	$y = \Pb[ \nwspace \neg B_m \wedge (\exists k < m \colon B_k) \nwspace ]$,
	$z = \Pb[ \nwspace (\forall j \le m  \colon \neg A_j) \wedge B_m \nwspace ]$,
	and $w = \Pb[ B_m ]$ gives the equivalence.
	
	The actual lemma is proven by induction over $m$.
	The case $m = 1$ is trivial as both sides simplify to $\Pb[\neg A_1 \mid B_1]$.	
	Suppose that $\Pb[ \nwspace \forall j < m \colon \neg A_j 
	\mid \exists k < m \colon B_k \nwspace ]
	\le \Pb[ \nwspace \forall j < m \colon \neg A_j \mid B_{m-1} \nwspace ]$
	holds, it is sufficient to conclude (\ref{eq:equivalent_probabilities}).
	The independence of the trials imply
	\begin{align*}
		\hspace*{-2.66em}													
		\Pb[ \nwspace \forall j \le m \colon \neg A_j
			\mid \neg B_m \wedge (\exists k < m \colon B_k) \nwspace ]
			&= \frac{\Pb[ \nwspace (\forall j \le m \colon \neg A_j)
				\wedge \neg B_m \wedge (\exists k < m \colon B_k) \nwspace ]}
				{\Pb[\nwspace \neg B_m \wedge (\exists k < m \colon B_k) \nwspace ]}\\
			&= \frac{\Pb[ \nwspace \neg A_m \wedge \neg B_m \nwspace ]
				\cdot \Pb[\nwspace (\forall j < m \colon \neg A_j)
					\wedge (\exists k < m \colon B_k) \nwspace ]}
				{\Pb[\nwspace \neg B_m \nwspace] 
					\cdot \Pb[ \nwspace \exists k < m \colon B_k \nwspace ]}\\
			&= \Pb[ \nwspace \neg A_m \mid \neg B_m \nwspace ] 
				\cdot \Pb[ \nwspace \forall j < m \colon \neg A_j 
					\mid \exists k < m \colon B_k \nwspace].
	\end{align*}
	
	\noindent
	By induction, this is at most $\Pb[\nwspace \neg A_m \mid \neg B_m \nwspace ] \cdot
	\nwspace \Pb[\nwspace \forall j < m \colon \neg A_j \mid B_{m-1} \nwspace]$.
	The probabilities of the outcomes do not change over the trials,
	and also event $B_m$ implies $\neg A_m$.
	Therefore,
	\begin{multline*}
		\Pb[ \nwspace \neg A_m \mid \neg B_m \nwspace ] \cdot
			\nwspace \Pb[ \nwspace \forall j < m \colon \neg A_j | B_{m-1} \nwspace]
			= \frac{1-\Pb[A_m] - \Pb[B_m]}{1-\Pb[B_m]} 
				\ \frac{(1- \Pb[A_m])^{m-2} \cdot \Pb[B_{m-1}]}{\Pb[B_{m-1}]}\\
			= \left(1 - \frac{\Pb[A_m]}{1-\Pb[B_m]}\right) \cdot \Big(1- \Pb[A_m]\Big)^{m-2}
			\le \Big(1- \Pb[A_m]\Big)^{m-1}
			= \Pb[\nwspace \forall j \le m \colon \neg A_j \mid B_m \nwspace ]. \qedhere
	\end{multline*}
\end{proof}

\section{The Chernoff--Hoeffding Theorem}
\label{sec:Chernoff--Hoeffding}

Most of the concentration bounds on the binomial distribution
that are used in many different fields of science
can be traced back to the Chernoff--Hoeffding theorem~\cite{Hoeffding63Inequalities,Dubhashi09ConcentrationOfMeasure}.
It employs the Kullback--Leibler divergence to bound the probability
that a random variable $X \sim \Bin(n,p)$
deviates from its expectation $\Ev[X] = pn$.
The theorem states for all $x$ with $0 \le x \le p$ that
\begin{equation*}
	\Pb[X \le xn] \le 2^{-\!\Dkl(x \,{\|}\, p) \nwspace n}
		= \Big(\frac{p}{x}\Big)^{xn} \left(\frac{1-p}{1-x}\right)^{(1-x)n}\!\!.
\end{equation*}

\noindent
It follows easily that also
$\Pb[ X \,{\ge}\, x n]  \le 2^{-\!\Dkl(x \,{\|}\, p) \nwspace n}$
is true for all $p \le x \le 1$.
Several weaker but more practical inequalities have been derived from the Chernoff--Hoeffding theorem,
colloquially summarized as Chernoff bounds~\cite{Mitzenmacher17Probablility,Doerr20Evolutionary}.
Cramér's theorem asserts that
the exponent $\Dkl(x \,{\|}\, p)$ is asymptotically tight~\cite{Cramer38SurUn,Deuschel84LargeDeviations}.
Any improvement of this inequality can therefore be at most subexponential in $n$.
Stirling's approximation, see~\cite{Ash90InformationTheory},
gives the following lower bound assuming that the product $xn$ is an integer,
\begin{equation*}
	\Pb[X \le xn] 
		\ge \frac{1}{\sqrt{8 \nwspace n \nwspace x(1-x)}} \cdot 2^{-\!\Dkl(x \,{\|}\, p) \nwspace n}.
\end{equation*}

There is an obvious gap of order $\sqrt{n}$ between the two estimates.
This section is dedicated to closing this gap by proving Theorem~\ref{thm:better_Hoeffding}.
It sharpens both inequalities, making them tight up to constant factors.
While this improvement will be a valuable tool in our study of hypergraphs,
it has applications also in many other areas.
The proof is split into two major parts.
The first one treats those $x$ for which the product $xn$ is an integer,
the second one extends this to the general case.
The parts are further subdivided depending on the limiting behavior of $x = x(n)$,
namely, whether it converges to $p$ or not.

\subsection{Integral Case}
\label{subsec:integral_case}

For the integral case,
we first use a result by Klar~\cite{Klar00TailProbabilities}
about the connection between the distribution function
and the probability mass function (PMF) of the binomial distribution.

\begin{lemma}[Proposition {1}$\nwspace$(c) in \cite{Klar00TailProbabilities}]
\label{lem:Klar}
	Let $n$ be a positive integers, $p \neq 0$ a probability,
	and $X \sim \Bin(n,p)$ a binomial variable.
	For all non-negative integers $k \le pn$, it holds that
	\begin{equation*}
		1 \le \frac{\Pb[X \le k]}{\Pb[X = k]}
		\le \frac{p (n+1-k)}{n+1-k - (n+1)(1-p)}.
	\end{equation*}
\end{lemma}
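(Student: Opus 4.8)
The plan is to establish the two inequalities separately; both are elementary, so one could alternatively just invoke Klar~\cite{Klar00TailProbabilities}, but the argument is short enough to reproduce. The lower bound is immediate: since $k \le pn \le n$, the event $\{Y = k\}$ is one of the (non-negative) summands of $\{Y \le k\} = \bigcup_{j=0}^{k} \{Y = j\}$, whence $\Pb[Y \le k] \ge \Pb[Y = k]$.

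For the upper bound I would exploit the monotone likelihood ratio of the binomial mass function. Writing $\Pb[Y = j] = \binom{n}{j} p^j (1-p)^{n-j}$, a one-line computation gives, for $1 \le j \le n$,
\[
	\frac{\Pb[Y = j-1]}{\Pb[Y = j]} = \frac{j}{n+1-j} \cdot \frac{1-p}{p} =: r_j .
\]
The factor $j/(n+1-j)$ is increasing in $j$, so $r_j \le r_k$ for all $j \le k$; set $r := r_k = k(1-p)/\big((n+1-k)p\big)$. The hypothesis $k \le pn$ yields $k < p(n+1)$, which is precisely the inequality $r < 1$ (and $r \ge 0$ trivially, the degenerate case $k = 0$ being handled separately below).

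The next step is to show by induction on $i \ge 0$ that $\Pb[Y = k-i] \le r^{\,i}\, \Pb[Y = k]$: the base case $i = 0$ is trivial, and the inductive step follows from $\Pb[Y = k-i-1] = r_{k-i}\,\Pb[Y = k-i] \le r\,\Pb[Y = k-i]$ combined with the inductive hypothesis. Summing this geometric domination and using $0 \le r < 1$,
\[
	\Pb[Y \le k] = \sum_{i=0}^{k} \Pb[Y = k-i]
	\le \Pb[Y = k] \sum_{i=0}^{\infty} r^{\,i} = \frac{\Pb[Y = k]}{1-r} .
\]
It remains to rewrite $1/(1-r)$ in the stated form: clearing denominators gives $(n+1-k)p - k(1-p) = p(n+1) - k = (n+1-k) - (n+1)(1-p)$, hence $1/(1-r) = p(n+1-k)/\big(n+1-k-(n+1)(1-p)\big)$, as claimed. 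When $k = 0$, both the lower and the upper bound collapse to $\Pb[Y = 0]$, so nothing is to prove there.

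I do not expect a genuine obstacle. The only points requiring a moment of care are checking that $r_j$ is monotone and strictly below $1$ throughout the range $j \le k \le pn$ — this is exactly what makes the geometric series converge and, equivalently, what keeps the denominator $n+1-k-(n+1)(1-p)$ of the final bound positive — and dispatching the degenerate case $k = 0$ (and, implicitly, ruling out $p \in \{0,1\}$, where the ratios may be undefined).
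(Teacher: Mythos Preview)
Your argument is correct. The paper, however, does not prove this lemma at all: it is quoted as Proposition~1(c) of Klar~\cite{Klar00TailProbabilities} and used as a black box. What you have supplied is precisely the standard elementary derivation---bounding successive ratios of the binomial mass function by their value at $j=k$ and summing the resulting geometric series---which is essentially Klar's own argument. So there is nothing to compare on the paper's side; your write-up simply makes the cited result self-contained, and the small checks you flag (monotonicity of $r_j$, the equivalence $r_k<1 \Leftrightarrow k<p(n+1)$, the degenerate case $k=0$) are exactly the ones needed.
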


\noindent
We combine this with the perplexity bound on the binomial coefficient
in Lemma~\ref{lem:lowerbound_binCoeff}.
The respective lower bounds in the next lemma
where known before, see for example the textbook by Ash~\cite[Lemma~4.7.2]{Ash90InformationTheory},
we reprove them here en passant.
The upper bounds are novel and we will later use them in our proof of 
the integral case of Theorem~\ref{thm:better_Hoeffding}.

\begin{lemma}
\label{lem:integral_bounded_away}
	Let $n$ be a positive integer, $0 < p < 1$ a non-trivial probability, 
	and $X \sim \Bin(n,p)$ a binomial variable.
	Suppose $0 < x < 1$ is a rational such that $xn$ is an integer.
	\begin{enumerate}[\textup{(}$i$\textup{)}]
		\item\label{clause:lem_bounded_away_below_mean}
			If $x < p$, then
			\begin{equation*}
			\frac{1}{\sqrt{8 \nwspace n \nwspace x (1-x)}}
				\cdot 2^{-\!\Dkl(x \,{\|}\, p) \nwspace n}
				\le \Pb[ X \le x n]
				\le \frac{ p \nwspace \sqrt{1-x}}{(p - x) \nwspace \sqrt{\pi \nwspace xn}}
					\cdot  2^{-\!\Dkl(x \,{\|}\, p) \nwspace n}.
			\end{equation*}
		\item\label{clause:lem_bounded_away_above_mean}
			If $p < x$, then
			\begin{equation*}
			\frac{1}{\sqrt{8 \nwspace n \nwspace x (1-x)}}
				\cdot 2^{-\!\Dkl(x \,{\|}\, p) \nwspace n}
				\le \Pb[ X \ge x n]
					\le \frac{(1-p) \nwspace \sqrt{x}}{(x - p) \nwspace \sqrt{\pi \nwspace (1-x)n}}
							\cdot 2^{-\!\Dkl(x \,{\|}\, p) \nwspace n}.
			\end{equation*}
	\end{enumerate}
\end{lemma}

\begin{proof}
	Applying the first statement to the complementary variable
	$\overline X \sim \Bin(n,1\,{-}\,p)$
	implies the second statement since
	$\Pb[ X \ge x n ] = \Pb[\nwspace \overline X \le (1\,{-}\,x)n]$.
	Hereby, we use that the Kullback--Leibler divergence observes
	$\Dkl(1\,{-}\,x \,{\|}\, 1\,{-}\,p) = \Dkl(x \,{\|}\, p)$.
	We are left to prove the first statement.
	
	Lemma~\ref{lem:lowerbound_binCoeff} gives the following error bounds on the PMF
	$\Pb[X = xn] = \binom{n}{xn} \cdot p^{xn}(1-p)^{(1-x) n}$.
	\begin{equation*}
		\frac{1}{\sqrt{8 \nwspace n \nwspace x(1-x)}}
			\le \frac{\Pb[X = xn]} {2^{\Hb(x) n} \cdot p^{xn} (1-p)^{(1-x)n}}
			= \frac{\Pb[X = xn]} {2^{- \Dkl(x \,{\|}\, p) \nwspace n}}
			\le \frac{1}{\sqrt{\pi \nwspace n \nwspace x(1-x)}}.
	\end{equation*}	
	The lower bounds follows immediately from $\Pb[X \le xn] \ge \Pb[X = xn]$.
	
	For the upper bound, we use Lemma~\ref{lem:Klar} at the integer position $k = xn$.
	Let $f_{n,xn}(p)$ denote the resulting bound on the ratio $\Pb[X \le xn]/\Pb[X =xn]$,
	that is,
	\begin{equation*}
		f_{n,xn}(p) = \frac{p \nwspace (n+1-xn)}{n+1-xn - (n+1)(1-p)}
			= \frac{p \nwspace (n+1 - xn)}{p \nwspace (n+1) - xn}
			= \frac{p \nwspace (1-\frac{xn}{n+1})}{p-\frac{xn}{n+1}}.
	\end{equation*}
	
	\noindent
	We claim that for all $x$ and $p$ with $x < p$, the function $f_{n,xn}(p)$ is increasing in $n$.
	We show this by verifying that the (partial) discrete derivative $\Delta_n( f_{n,xn})$
	with respect to $n$ is positive.
	\begin{align*}
		\Delta_n( f_{n,xn})(p)
			&= f_{n+1,x(n+1)}(p) - f_{n,xn}(p)
			 = \frac{p \nwspace (n+2-x(n+1))}{p \nwspace (n+2) - x(n+1)}
				- \frac{p \nwspace (n+1-xn)}{p \nwspace (n+1) - xn}\\
			&= \frac{p \nwspace (1-p) \nwspace x}{((p-x) \nwspace n+p) \cdot ((p-x) \nwspace n+2p-x)}
			 > 0.
	\end{align*}
	
	The function $f_{n,xn}(p)$ thus converges from below
	to $p \nwspace (1-x)/(p-x)$ as $n$ increases,
	giving an upper bound on $\Pb[X \le xn]/\Pb[X =xn]$ for all $n$.
	Multiplying with the error bounds and the divergence completes the proof.
\end{proof}

The upper bounds above are already
very close to the desired ones of Theorem~\ref{thm:better_Hoeffding}.
In fact, we will see that the lemma is enough to conclude
$\Pb[X \le xn] \le \varphi \cdot 2^{- \Dkl(x \,{\|}\, p) \nwspace n}$
if $xn$ is an integer and $\varphi = \min(1, 1/((p-x) \sqrt{xn}))$.
The lower bound in Statement~($\ref{clause:lem_bounded_away_below_mean}$), however,
matches the upper one only
if the $x = x(n)$ is bounded away from $p$ for all $n$.
More work is needed for the case $x \to p$.
It has already been useful to have a good estimate for the ratio $\Pb[X \le k]/\Pb[X = k]$.
Unfortunately, Lemma~\ref{lem:Klar} gives only a trivial lower bound.
We strengthen this in the next lemma,
Lemma~\ref{lem:integral_converge_to_p} then shows how this
translates into a stronger lower bound on the binomial distribution function.
Finally, Lemma~\ref{lem:integral_case}  combines all results of this section
into a version of the Chernoff--Hoeffding theorem,
which is tight whenever the product $xn$ is an integer.

\begin{lemma}
\label{lem:ratio_lower_bound}
	Let $n$ be a positive integer, $0 < p < 1$ a non-trivial probability,
	and $X \sim \Bin(n,p)$ a binomial variable.
	Then, for all non-negative integers $i$ and $k$ with $i \le k \le pn$, it holds that
	\begin{equation*}
		\frac{\Pb[X \le k]}{\Pb[X = k]}
		\ge (k-i+1) \left( 1- \frac{pn - i}{pn(1- \frac{k}{n})}\right)^{k-i}.
	\end{equation*}
\end{lemma}
\vspace*{1em}

\begin{proof}
	The PMF of $X$ is increasing for arguments smaller than $pn$,
	therefore
	\begin{equation*}
		\frac{\Pb[X \le k]}{\Pb[X = k]} 
			= \sum_{j=0}^k \frac{\Pb[X = j]}{\Pb[X = k]}
			\ge \sum_{j=i}^k \frac{\Pb[X = j]}{\Pb[X = k]} 
			\ge (k-i+1) \,\frac{\Pb[X = i]}{\Pb[X = k]}.
	\end{equation*}
	
	\noindent
	The last ratio is lower-bounded by
	\begin{equation*}
		\frac{\Pb[X = i]}{\Pb[X = k]} 
			= \frac{k! \, (n-k)!}{i! \, (n-i)!} \left(\frac{1-p}{p}\right)^{k-i}
			= \prod_{\ell=1}^{k-i} \frac{i+\ell}{n-i-\ell+1} \cdot \left(\frac{1-p}{p}\right)^{k-i}
			\ge \left( \frac{i}{n-i} \frac{1-p}{p} \right)^{k-i}\!\!.
	\end{equation*}
	
	\noindent
	For the base of the last expression we get
	\begin{equation*}
		\frac{i}{n-i} \frac{1-p}{p} 
			= \frac{pn - pi - pn + i}{pn-pi} 
			= 1 - \frac{pn - i}{pn-pi}
			= 1- \frac{pn - i}{n p(1-\frac{i}{n})}
			\ge 1- \frac{pn - i}{n p(1-\frac{k}{n})}. \qedhere
	\end{equation*}
\end{proof}

The first factor $k-i+1$ increases as $i$ gets smaller
while at the same time the second factor decreases.
Therefore, in order to apply Lemma~\ref{lem:ratio_lower_bound},
one has to choose a balancing cut-off point.
We do so in the proof of the following lemma.

\begin{lemma}
\label{lem:integral_converge_to_p}
	Let $n$ be a positive integer, $0 < p< 1$ a non-trivial probability,
	and $X \sim \Bin(n,p)$ a binomial variable.
	Suppose $0 < x < 1$ is a rational such that $xn$ is an integer.
	\begin{enumerate}[\textup{(}$i$\textup{)}]
		\item\label{clause:lem_converge_below_mean}
			If $x < p$, then
			\begin{equation*}
				\Pb[ X \le x n]
				\ge \frac{p\nwspace \sqrt{1-x}}{16 \nwspace \sqrt{2}} 
				\cdot \min\!\left( 1, \,\frac{1}{(p-x) \nwspace \sqrt{xn}} \right)
				\cdot 2^{-\!\Dkl(x \,{\|}\, p) \nwspace n}.
			\end{equation*}
		\item\label{clause:lem_converge_above_mean}
			If $p < x$, then
			\begin{equation*}
				\Pb[ X \ge x n]
				\ge \frac{(1-p) \nwspace \sqrt{x}}{16 \nwspace \sqrt{2}} 
				\cdot \min\!\left( 1, \,\frac{1}{(x-p) \nwspace \sqrt{(1-x)n}} \right)
				\cdot 2^{-\!\Dkl(x \,{\|}\, p) \nwspace n}.
			\end{equation*}
	\end{enumerate}
\end{lemma}

\begin{proof}
	The second statement follows from the first as in Lemma~\ref{lem:integral_bounded_away}.
	Define an auxiliary integer function $g = g(n,p,x)$ as
	\begin{equation*}
		g(n,p,x) = \left\lfloor \frac{p(1-x)}{2} 
			\cdot \min\!\left( \sqrt{xn}, \,\frac{1}{p-x} \right)\nwspace\right\rfloor\!.
	\end{equation*}
	
	\noindent
	Applying Lemma~\ref{lem:ratio_lower_bound} at position $k = xn$ 
	with the cut-off point $i = xn - g$ gives
	\begin{equation}
	\label{eq:distribution_lower_bound}
		\Pb[ X \le x n] \ge (g+1) \left( 1- \frac{pn - xn + g}{pn(1- x)}\right)^{g} \cdot \Pb[X = xn].
	\end{equation}
	
	We want to lower-bound the middle factor in (\ref{eq:distribution_lower_bound}) by a constant.
	Bernoulli's inequality gives
	\begin{equation*}
		\left( 1- \frac{pn - xn + g}{pn(1- x)}\right)^{g}
			= \left( 1- \frac{p-x + \frac{g}{n}}{p(1- x)}\right)^{g}
			\ge 1- \frac{g(p-x) + \frac{g^2}{n}}{p(1- x)}.
	\end{equation*}
	
	\noindent
	We claim that the numerator $g(p-x) + g^2/n$ is at most $3 \nwspace p(1-x)/4$.
	We split the argument depending on the relative size of $\sqrt{xn}$ and $1/(p-x)$.
	If $\sqrt{xn} \ge 1/(p-x)$, then we have 
	$g =  \lfloor \, p(1-x)/ 2(p-x) \rfloor$ and thus
	\begin{equation*}
		g(p-x) + \frac{g^2}{n}
			\le \frac{p(1-x)}{2} + \frac{p^2 (1-x)^2}{4}  \cdot \frac{1}{(p-x)^2 \nwspace n}
			\le \frac{p(1-x)}{2} + \frac{p^2 (1-x)^2}{4} \cdot x.
	\end{equation*}
	
	\noindent
	Conversely, if $\sqrt{xn} \le 1/(p-x)$, then
	$g =  \lfloor \, p(1-x) \nwspace \sqrt{xn}/ 2 \rfloor$ and
	\begin{equation*}
		g(p-x) + \frac{g^2}{n}
			\le \frac{p(1-x)}{2} \cdot (p-x) \sqrt{xn}+ \frac{p^2 (1-x)^2}{4}  \cdot \frac{xn}{n}
			\le \frac{p(1-x)}{2} + \frac{p^2 (1-x)^2}{4} \cdot x.
	\end{equation*}
	The last expressions of both inequalities are the same
	and can be bounded by $3 \nwspace p(1-x)/4$.
	The middle factor is therefore at least a constant since
	\begin{equation*}
		1- \frac{g(p-x) + \frac{g^2}{n}}{p(1- x)}
			\ge 1 - \frac{3 \nwspace p(1-x)}{4} \ \frac{1}{p(1-x)} = \frac{1}{4}.
	\end{equation*}
	
	\noindent
	Reinserting this into Inequality~(\ref{eq:distribution_lower_bound})
	and applying the definition of $g$ and Lemma~\ref{lem:lowerbound_binCoeff} gives the result.
	\begin{align*}
		\Pb[ X \le x n] 
			&\ge \frac{g+1}{4} \cdot \Pb[X = xn]
			 \ge \frac{p(1-x)}{8} 
				\cdot \min\!\left( \sqrt{xn}, \,\frac{1}{p-x} \right) \cdot \Pb[X = xn]\\
			&\ge \frac{p(1-x)}{8} 
				\cdot \min\!\left( \sqrt{xn}, \,\frac{1}{p-x} \right)
				\cdot \frac{1}
					{\sqrt{8 \nwspace n \nwspace x (1-x)}} \cdot 2^{-\!\Dkl(x \,{\|}\, p) \nwspace n}\\
			&= \frac{p\nwspace \sqrt{1-x}}{16 \nwspace \sqrt{2}} 
				\cdot \min\!\left( 1, \,\frac{1}{(p-x) \nwspace \sqrt{xn}} \right)
				\cdot 2^{-\!\Dkl(x \,{\|}\, p) \nwspace n}. \qedhere
	\end{align*}
\end{proof}

Next, we prove Theorem~\ref{thm:better_Hoeffding} for the case that $xn$ is an integer
by combining the results above.
We emphasize the facts that Lemma~\ref{lem:integral_case} holds for all positive integers $n$,
not only asymptotically,
and $x$ may range over the whole interval $[0,1]$.

\begin{lemma}[integral case of Theorem~\ref{thm:better_Hoeffding}]
\label{lem:integral_case}
	Let $n$ be a positive integer, $0 < p < 1$ a non-trivial probability,
	and	$X  \sim \Bin(n,p)$ a binomial variable.
	Suppose $x = x(n)$ takes rational values in the unit interval such that $xn$ is an integer.
	Let $\varphi$ and $\psi$ denote the functions
	\begin{equation*}
		\varphi(n,p,x) = \min\!\left( 1, \,\frac{1}{(p-x) \nwspace \sqrt{xn}} \right)
		\quad\text{and}\quad
		\psi(n,p,x) = \min\!\left( 1, \,\frac{1}{(x-p) \nwspace \sqrt{(1-x)n}} \right)\!,
	\end{equation*}

	\noindent
	with additionally $\varphi(n,p,0) = \varphi(n,p,p) = 1$
	and $\psi(n,p,1) = \psi(n,p,p) = 1$.
	\begin{enumerate}[\textup{(}$i$\textup{)}]
		\item\label{clause:integral_case_below_mean} If $x \le p$, then
			$\frac{p \nwspace \sqrt{1-p}}{16 \sqrt{2}}
				\cdot \varphi \cdot 2^{-\!\Dkl(x \,{\|}\, p) \nwspace n} \le \Pb[X \le xn]
				\le \varphi \cdot 2^{-\!\Dkl(x \,{\|}\, p) \nwspace n}$.
		\item\label{clause:integral_case_above_mean} If $x \ge p$, then
			$\frac{(1-p) \sqrt{p}}{16 \sqrt{2}}
				\cdot \psi \cdot 2^{-\!\Dkl(x \,{\|}\, p) \nwspace n} \le \Pb[X \ge xn]
				\le \psi \cdot 2^{-\!\Dkl(x \,{\|}\, p) \nwspace n}$.
	\end{enumerate}
\end{lemma}

\begin{proof}
	Statement~($\ref{clause:integral_case_above_mean}$) follows from 
	($\ref{clause:integral_case_below_mean}$) in the usual way
	since $\psi(n,p,x) = \varphi(n,1\,{-}\,p,1\,{-}\,x)$.
	Let $C = p \nwspace \sqrt{1-p}/16 \sqrt{2}$.
	Note that $C$ is at most $0.045$ for any $p$. 
	We first discuss the corner cases $x = 0$ and $x = p$ (assuming that $pn$ is an integer).
	If $x = 0$, then we have
	$\Pb[X \le 0 \cdot n] = (1-p)^n = \varphi(n,p,0) \cdot 2^{-\!\Dkl(0 \,{\|}\, p) \nwspace n}$.	
	If $x = p$,	the upper bound 
	$\Pb[X \le pn] \le 1 = \varphi(n,p,p) \cdot 2^{-\!\Dkl(p \,{\|}\, p) \nwspace n}$
	holds vacuously.
	The lower bound follows from $pn$ being the median of the binomial distribution,
	which implies $\Pb[X \le pn] \ge \sfrac{1}{2} 
	\ge C = C \cdot \varphi(n,p,p) \cdot 2^{-\!\Dkl(p \,{\|}\, p) \nwspace n}$.
	
	Assume $0 < x < p$.
	The original Chernoff--Hoeffding theorem and
	Lemma~\ref{lem:integral_bounded_away} together give
	\begin{equation*}
		\Pb[X \le xn] 
			\le \min\!\left(1, \,
				\frac{ p \nwspace \sqrt{1-x}}{(p - x) \nwspace \sqrt{\pi \nwspace xn}}\right)
					\cdot  2^{-\!\Dkl(x \,{\|}\, p) \nwspace n}
			\le \min\!\left(1, \,
				\frac{p}{\sqrt{\pi}}\,\frac{1}{(p - x) \nwspace \sqrt{\nwspace xn}}\right)
					\cdot  2^{-\!\Dkl(x \,{\|}\, p) \nwspace n}.
	\end{equation*}
	
	\noindent
	The latter is at most $\varphi \cdot 2^{-\!\Dkl(x \,{\|}\, p) \nwspace n}$.
	Finally, the lower bound in this case 
	is an easy consequence of Lemma~\ref{lem:integral_converge_to_p}
	and $p \nwspace \sqrt{1-x}/16 \sqrt{2}$ being larger than $C = p \nwspace \sqrt{1-p}/16 \sqrt{2}$.
\end{proof}

\subsection{General Case}
\label{subsec:general_case}

The second major step of the argument is to extend the result
above from integral products $xn$ to arbitrary real $x$.
The equality $\Pb[X \le xn] = \Pb[X \le \lfloor xn \rfloor \nwspace]$
holds universally as $X$ assumes only integer values.
In Section~\ref{subsec:integral_case}, we have given bounds
on the second probability $\Pb[X \le \lfloor xn \rfloor \nwspace]$
in terms of the ratio $\lfloor xn \rfloor/n$.
To reach the generality of the Chernoff--Hoeffding theorem,
we need to infer bounds on $\Pb[X \le xn]$ in terms of $x$.
In what follows, let $x'$ abbreviate $\lfloor xn \rfloor/n$.
Consider the upper bound in Lemma~\ref{lem:integral_case}~($\ref{clause:integral_case_below_mean}$)
as an illustrating example. 
It states that
\begin{equation*}
	\Pb[X \le xn] = \Pb[X \le x'n] 
		\le \min\!\left( 1, \,\frac{1}{(p-x') \nwspace \sqrt{x'n}} \right)
			\cdot 2^{-\!\Dkl(x' \,{\|}\, p) \nwspace n}.
\end{equation*}

\noindent
If there exists some constant $C$, possibly dependent on $p$ but independent of $x$ and $n$,
such that
\begin{equation*}
	\frac{1}{(p-x') \nwspace \sqrt{x'n}} \cdot 2^{-\!\Dkl(x' \,{\|}\, p) \nwspace n}
		\le \frac{C}{(p-x) \nwspace \sqrt{xn}} \cdot 2^{-\!\Dkl(x \,{\|}\, p) \nwspace n},
\end{equation*}

\noindent
then our estimate transfers to the general case.
A similar reasoning applies to the other cases.

The next two lemmas prepare the necessary technical machinery
to show the existence of those constants.
Lemma~\ref{lem:monotone_upper} clarifies the monotonicity of the functions in question.
It shows that transitioning from $x'$ to $x$ can only increase the upper bound,
meaning that we can actually choose $C = 1$ in the above illustration.
For the opposite direction,
Lemma~\ref{lem:linear_loss} asserts that this transitions
incurs a multiplicative loss that is at most linear in $x$.
Below, we often conclude the monotonic behavior
of a product of functions from that of its factors.
While in general the product of non-decreasing functions is not itself non-decreasing,
the monotonicity transfers if all factors are additionally non-negative.

\begin{lemma}
\label{lem:monotone_upper}
	Let $n$ be a positive integer, and $p$ and $x$ two probabilities.
	The function
	\begin{equation*}
		g_{n,p}(x) = \frac{2^{-\!\Dkl(x \,{\|}\, p) \nwspace n}}{(p-x) \nwspace \sqrt{xn}} 
	\end{equation*}
	is non-decreasing for all $x$ such that $\sfrac{1}{n} \le x < p$,
	provided that $n$ is sufficiently large.
\end{lemma}

\begin{proof}
	Quantity $2^{-\!\Dkl(x \,{\|}\, p) \nwspace n}$ is non-decreasing for $x \le p$
	(Lemma~\ref{lem:monotone_div})
	and it is not hard to prove this also for
	$1/(p-x)\sqrt{xn}$ given that $x \ge \sfrac{p}{3}$.
	The main focus of this proof is to show
	that the divergence power dominates the monotonicity of $g_{n,p}$ also
	for $\sfrac{1}{n} \le x \le \sfrac{p}{3}$.

	Taking derivatives gives
	\begin{align*}
	\hspace*{-.2em}													
		\frac{\mathrm{d}}{\mathrm{d}x} \nwspace g_{n,p}(x)
			&= \frac{1}{(p-x)^2 \nwspace x \nwspace \sqrt{n}} 
				\left(\! \left(\frac{\partial}{\partial x} \nwspace 2^{-\!\Dkl(x \,{\|}\, p) \nwspace n}
				\right)\! (p-x) \sqrt{x} - 2^{-\!\Dkl(x \,{\|}\, p) \nwspace n} 
				\left(\frac{\partial}{\partial x} \nwspace (p-x) \sqrt{x} \right)\!\right)\\
			&= \frac{1}{(p-x)^2 \nwspace x \nwspace \sqrt{n}} 
				\left(\! n \ln\!\left( \frac{1-x}{x} \frac{p}{1-p} \right)
					2^{-\!\Dkl(x \,{\|}\, p) \nwspace n} \nwspace (p-x) \sqrt{x} 
					- 2^{-\!\Dkl(x \,{\|}\, p) \nwspace n} \nwspace
				\frac{p-3x}{2 \sqrt{x}}\right)\\[.25em]
			&= \frac{ 2^{-\!\Dkl(x \,{\|}\, p) \nwspace n}}
					{(p-x) \nwspace x^{\sfrac{3}{2}} \nwspace \sqrt{n}}
				\left(\!n \ln\!\left( \frac{1-x}{x} \frac{p}{1-p}
				\right)\! x - \frac{p-3x}{2 (p-x)}\right)\!.
	\end{align*}

	The first factor is positive for all $n$ and $x < p$
	and the same is true for the second one if $\sfrac{p}{3} < x$.
	Assume $x \le \sfrac{p}{3}$ in the remainder.
	Then, the last term of the second factor, $- (p-3x)/2 (p-x)$,
	is at least $- \nwspace\sfrac{1}{2}$.
	It is thus sufficient to prove the non-negativity of
	\begin{equation*}
		h_{n,p}(x) = n \ln\!\left(\frac{1-x}{x} \frac{p}{1-p} \right) \! x - \frac{1}{2}
	\end{equation*}
	
	\noindent
	on the subinterval $[\sfrac{1}{n},\, \sfrac{p}{3}]$ for all $n$ large enough.
	We do this in two claims.
	First, $h_{n,p}$ is concave there and,
	secondly, its values at the endpoints of the interval are non-negative.
	Regarding the concavity, observe that the derivative
	\begin{equation*}
		\frac{\mathrm{d}}{\mathrm{d}x} \, h_{n,p}(x) 
			= n \ln\!\left( \frac{1-x}{x} \frac{p}{1-p} \right) - \frac{n}{1-x}
	\end{equation*}
	
	\noindent
	is the sum of two non-increasing functions of $x$;
	$n \ln(\frac{1-x}{x} \frac{p}{1-p})$ is non-increasing
	as it is the derivative of the concave mapping $x \mapsto -\Dkl(x \,{\|}\, p) \nwspace n$.
	At the endpoint $\sfrac{1}{n}$, we have
	\begin{equation*}
		h_{n,p}\!\left(\frac{1}{n}\right) 
			= \ln\left( \!(n-1) \, \frac{p}{1-p} \right) - \frac{1}{2},
	\end{equation*}
	
	\noindent
	which is non-negative for all $n \ge (\sqrt{e} \nwspace (1-p)/p) + 1$.
	Similarly at endpoint $\sfrac{p}{3}$, 
	\begin{equation*}
		h_{n,p}\!\left(\frac{p}{3}\right) 
			= n \ln\left( \frac{3-p}{1-p} \right) \frac{p}{3} - \frac{1}{2}
	\end{equation*}
	
	\noindent
	is non-negative for $n \ge 3/ 2p \ln(\frac{3-p}{1-p})$.
\end{proof}

\begin{lemma}
\label{lem:linear_loss}
	Let $n$ be a positive integer, $p$ and $x$ two probabilities,
	$x' = \lfloor xn \rfloor/n$, and again
	\begin{equation*}
		g_{n,p}(x) = \frac{2^{-\!\Dkl(x \,{\|}\, p) \nwspace n}}{(p-x) \nwspace \sqrt{xn}}.
	\end{equation*}
	
	\noindent
	The following inequalities hold for all $x$ such that $\sfrac{1}{n} \le x < p$ 
	and $n$ sufficiently large.
	\begin{enumerate}[\textup{(}$i$\textup{)}]
		\item $g_{n,p}(x') \ge \frac{1-p}{p \sqrt{2}} 
			\nwspace e^{-\frac{1}{2-2p}} \cdot x \cdot g_{n,p}(x)$;
		\item $2^{-\!\Dkl(x' \,{\|}\, p) \nwspace n} 
		\ge \frac{1-p}{p} \nwspace  e^{-\frac{1}{2-2p}} \cdot x 
			\cdot 2^{-\!\Dkl(x \,{\|}\, p) \nwspace n}$.
	\end{enumerate}
\end{lemma}

\begin{proof}
	We make heavy use of the facts that $x - \sfrac{1}{n} < x' \le x$,
	and that $x \ge \sfrac{1}{n}$ implies $x' \ge \sfrac{1}{n}$.
	The relative difference between $g_{n,p}(x')$ and $g_{n,p}(x)$ is
	\begin{equation*}
		\frac{g_{n,p}(x')}{g_{n,p}(x)} = 
			\frac{2^{-\!\Dkl(x' \,{\|}\, p) \nwspace n}}{2^{-\!\Dkl(x \,{\|}\, p) \nwspace n}}
				\cdot \frac{p-x \ }{p-x'} \cdot \sqrt{\frac{x}{x'}}.
	\end{equation*}
	
	\noindent
	The last factor is at least $1$ and the middle one is
	$\frac{p-x\,\nwspace}{p-x'} = 1 - \frac{x-x'}{p-x'} \ge 1 - \frac{1}{pn} \ge \frac{1}{\sqrt{2}}$
	given that $n \ge 2/(2-\sqrt{2})p$.
	The first factor can be estimated using Lemma~\ref{lem:monotone_div} and $x'-x \le \sfrac{1}{n}$ as
	\begin{align*}
		\frac{2^{-\!\Dkl(x' \,{\|}\, p) \nwspace n}}{2^{-\!\Dkl(x \,{\|}\, p) \nwspace n}}
			&= \left(\frac{x}{1-x} \, \frac{1-p}{p}\right)^{(x-x')n}\!
				2^{-\!\Dkl(x' \,{\|}\, x) \nwspace n} \cdot 2^{-\!\Dkl(x \,{\|}\, p) \nwspace n}\\
			&\ge \frac{x}{1-x} \, \frac{1-p}{p} \
				2^{-\!\Dkl(x' \,{\|}\, x) \nwspace n} \cdot 2^{-\!\Dkl(x \,{\|}\, p) \nwspace n}
			 \ge x \cdot \frac{1-p}{p} \ 2^{-\!\Dkl(x' \,{\|}\, x) \nwspace n}
			 	\cdot 2^{-\!\Dkl(x \,{\|}\, p) \nwspace n}.
	\end{align*}
	
	\noindent
	It remains to show that $2^{-\!\Dkl(x' \,{\|}\, x) \nwspace n}$ is at least a constant,
	namely, we claim $2^{-\!\Dkl(x' \,{\|}\, x) \nwspace n} > e^{-\frac{1}{2-2p}}$.
	Let $t^- = \argmin_{t \in [x',x]} t(1-t)$,
	observe that $\sfrac{1}{n} \le t^- < p$ holds.
	By Lemma~\ref{lema:div_quadratic}, the exponent $\Dkl(x' \,{\|}\, x) \nwspace n$ (to the base $\sfrac{1}{2}$) is bounded.
	\begin{equation*}
		\Dkl(x' \,{\|}\, x) \nwspace n \le \frac{(x-x')^2}{t^- (1-t^-) \nwspace 2 \ln 2} \cdot n
			< \frac{\frac{1}{n^2}}{\frac{1}{n} (1-t^-) \nwspace 2 \ln 2} \cdot n 
			= \frac{1}{(1-t^-) \nwspace 2 \ln 2}
			< \frac{1}{(2-2p) \ln 2}.\qedhere
	\end{equation*}
\end{proof}

We have the tools ready to prove Theorem~\ref{thm:better_Hoeffding} in its entirety.

\begin{backInTime}{thm:better_Hoeffding}
\begin{theorem}[restated with explicit constants]
	Let $n$ be a positive integer, $0 < p < 1$ a non-trivial probability, and
	$X  \sim \Bin(n,p)$ a binomial variable.
	Suppose the function $x = x(n)$ takes real values in the interval
	$[\varepsilon, \, 1\,{-}\,\varepsilon]$ for some $\varepsilon > 0$.
	Let $\varphi$ and $\psi$ denote the functions
	\begin{equation*}
		\varphi(n,p,x) = \min\!\left( 1, \,\frac{1}{(p-x) \nwspace \sqrt{xn}} \right)
		\ \, \text{and}\quad
		\psi(n,p,x) = \min\!\left( 1, \,\frac{1}{(x-p) \nwspace \sqrt{(1-x)n}} \right)\!,
	\end{equation*}
	
	\noindent
	with additionally $\varphi(n,p,p) = 1$ and $\psi(n,p,p) = 1$.
	The following statements hold for all $n$ sufficiently large.

	\begin{enumerate}[\textup{(}$i$\textup{)}]
		\item\label{case:restated_Chernoff-Hoeffding} If $x \le p$, then 
				$\frac{\varepsilon \nwspace (1-p)^{\frac{3}{2}}}{32} \nwspace e^{-\frac{1}{2-2p}}
					\cdot \varphi \cdot 2^{-\!\Dkl(x \,{\|}\, p) \nwspace n} \le \Pb[X \le xn]
					\le \varphi \cdot 2^{-\!\Dkl(x \,{\|}\, p) \nwspace n}$.
		\item If $x \ge p$, then
				$\frac{\varepsilon \nwspace p^{\frac{3}{2}}}{32} \nwspace e^{-\frac{1}{2p}} 
					\cdot \psi \cdot 2^{-\!\Dkl(x \,{\|}\, p) \nwspace n} \le \Pb[X \ge xn]
					\le \psi \cdot 2^{-\!\Dkl(x \,{\|}\, p) \nwspace n}$.
	\end{enumerate}
\end{theorem}
\end{backInTime}

\begin{proof}
	We only need to prove the first statement.
	Let $x' = \lfloor xn \rfloor/n$.
	This implies $x' \le x$ and
	makes $x'n$ an integer such that $\Pb[X \le x'n] = \Pb[X \le xn]$.
	Recall the definition of $g_{n,p}$ from Lemma~\ref{lem:monotone_upper}.
	It is chosen such that for all $n$, $p$, and $x$, we have
	\begin{equation*}
		\varphi(n,p,x') \cdot 2^{-\!\Dkl(x' \,{\|}\, p) \nwspace n} =
			\begin{cases}
				2^{-\!\Dkl(x' \,{\|}\, p) \nwspace n}, & \text{if } 1 \le \frac{1}{(p-x') \sqrt{x'n}}
					\text{ or } x' = p;\\
				g_{n,p}(x'), & \text{otherwise.}
			\end{cases}
	\end{equation*}
	
	Lemmas~\ref{lem:monotone_div} and \ref{lem:monotone_upper}
	together establish
	$\varphi(n,p,x') \cdot 2^{-\!\Dkl(x' \,{\|}\, p) \nwspace n}
		\le \varphi(n,p,x) \cdot 2^{-\!\Dkl(x \,{\|}\, p) \nwspace n}$
	in both cases, provided that $n$ is large enough.
	The upper bound in Statement~($\ref{case:restated_Chernoff-Hoeffding}$) now
	follows from the integral case in Lemma~\ref{lem:integral_case}.
	Regarding the lower bound, Lemma~\ref{lem:linear_loss} gives
	for all $n$ large enough,
	\begin{equation*}
		\varphi(n,p,x') \cdot 2^{-\!\Dkl(x' \,{\|}\, p) \nwspace n} \ge
			\begin{cases}
				\frac{1-p}{p} \nwspace e^{-\frac{1}{2-2p}} \cdot x
					 \cdot 2^{-\!\Dkl(x \,{\|}\, p) \nwspace n},
					 	& \text{if } 1 \le \frac{1}{(p-x') \sqrt{x'n}} \text{ or } x' = p;\\
				\frac{1-p}{p \sqrt{2}} \nwspace e^{-\frac{1}{2-2p}} \cdot x \cdot g_{n,p}(x),
						& \text{otherwise.}
			\end{cases}
	\end{equation*}
	
	\noindent
	In summary, using Lemma~\ref{lem:integral_case} and the assumption $x \ge \varepsilon$, 
	we have 
	\begin{align*}
		\Pb[X \le xn] = \Pb[X \le x'n]
			&\ge \frac{p \nwspace \sqrt{1-p}}{16 \sqrt{2}} 
				\cdot \varphi(n,p,x') \cdot 2^{-\!\Dkl(x' \,{\|}\, p) \nwspace n}\\ 
			&\ge \frac{\varepsilon \nwspace (1-p)^{\frac{3}{2}}}{32} \nwspace e^{-\frac{1}{2-2p}} 
				\cdot \varphi(n,p,x) \cdot 2^{-\!\Dkl(x \,{\|}\, p) \nwspace n}.\ \qedhere
	\end{align*}
\end{proof}

\subsection{Applications}
\label{subsec:applications}

In this excursive section, we highlight two applications of Theorem~\ref{thm:better_Hoeffding}
that do not fall into our main objective of studying hypergraphs,
we find them instructive nevertheless.
First, we give an alternative proof for the intuition
that the probability of a binomial variable taking values outside of a few standard variation
of its expectation may be small but does not converge to $0$.
The result was previously obtained via the Berry--Esseen inequality
using the normal approximation of the binomial distribution~\cite{Oliveto15ImprovedTimeComplexity}.
Secondly, we improve the rate of convergence in Cramér's theorem.

\vspace*{-.75em}
\subparagraph*{Anti-Concentration Inequalities.}
Chernoff bounds are usually interpreted as concentration inequalities~\cite{Dubhashi09ConcentrationOfMeasure}.
They state that the mass of the binomial distribution is concentrated around its mean
and the probability for any other value falls exponentially in the distance to the expectation.
However, it is also known that the probability of a binomial variable $X \sim \Bin(n,p)$
taking values outside of a constant number of standard deviations from $\Ev[X]$ does not vanish,
even as $n$ grows large.
Results of the latter kind are occasionally called anti-concentration inequalities.
The particular bound\footnote{%
	Lemma~6.1 in~\cite{Oliveto15ImprovedTimeComplexity} only states the existence of
	a particular pair of constants $c$, $C$.
	It is easy to check that their proof remains valid for any $c \ge 0$.
	For a generalization to unequal probabilities,
	see Lemma~1.10.16 in~\cite{Doerr20Evolutionary}.
}
we are interested in was given by Oliveto and Witt~\cite{Oliveto15ImprovedTimeComplexity}.
We give an alternative proof that avoids the normal distribution.

\begin{corollary}[Lemma~6.1 in~\cite{Oliveto15ImprovedTimeComplexity}]
	Let $n$ be a positive integer, $0< p < 1$ a non-trivial probability,
	and $X \sim \Bin(n,p)$ a binomial variable.
	Let $\sigma_X = \sqrt{np(1-p)}$ denote the standard deviation of $X$.
	For every non-negative real $c \ge 0$, there exists some positive $C > 0$,
	independent of $n$ but possibly dependent on $c$ and $p$,
	such that
	\begin{equation*}
		\Pb[X \le \Ev[X] - c \cdot \sigma_X] \ge C.
	\end{equation*}
	
	\noindent
	Conversely, for any non-negative function $f = f(n)$ with $\lim_{n \to \infty} f(n) =  \infty$,
	it holds that
	\begin{equation*}
		\lim_{n \to \infty} \Pb[X \le \Ev[X] - f \cdot \sigma_X] = 0.
	\end{equation*}
\end{corollary}

\begin{proof}
	Let $c'= c \sqrt{p(1-p)}$.
	In the notation of the previous sections, we have
	\begin{equation*}
		x = \frac{\Ev[X] - c \cdot \sigma_X}{n} 
			= p - \frac{c'}{\sqrt{n}}.
	\end{equation*}
	
	\noindent
	We assume $x \ge \sfrac{p}{2}$,
	this does not loose generality as $x$ converges to $p$.
	Applying Theorem~\ref{thm:better_Hoeffding} gives
	\begin{equation*}
		\Pb[X \le \Ev[X] - c \cdot \sigma_X] 
			\ge \frac{p \nwspace (1-p)^{\frac{3}{2}}}{64} e^{-\frac{1}{2-2p}}
				\cdot \min\!\left(1, \, \frac{1}{c' \nwspace \sqrt{x}}\right) 
				\cdot 2^{-\!\Dkl(x \,{\|}\, p) \nwspace n}.
	\end{equation*}
	
	\noindent
	The minimum is at least $1/cp\sqrt{1-p}$.
	To show that also the last factor is bounded, we use an argument
	very similar to the one in Lemma~\ref{lem:linear_loss}.
	Let $t^- = \argmin_{t \in [x,p]} t(1-t)$.
	We have $\sfrac{p}{2} \le t^- \le p$, resulting in
	\begin{equation*}
		\Dkl(x \,{\|}\, p) \cdot n 
		\le \frac{\left(\frac{c'}{\sqrt{n}}\right)^2}{t^- (1-t^-) \nwspace 2 \ln 2} \cdot n
			= \frac{c^2 \, p (1-p)}{t^-(1-t^-) \nwspace 2 \ln 2}
			\le \frac{c^2 \, p (1-p)}{\frac{p}{2}(1-p) \nwspace 2 \ln 2}
			= \frac{c^2}{\ln 2}.
	\end{equation*}
	
	If we move away from $\Ev[X]$ by more than a constant number of standard deviations,
	already the original Chernoff--Hoeffding theorem is sharp enough to prove 
	that the distribution function vanishes.
	For some non-negative $f(n) = \littleOm(1)$, 
	define $x = p - (f \sqrt{p(1-p)} \nwspace/\sqrt{n} \nwspace)$.
	If $x$ is negative, the statement holds vacuously;
	otherwise, we have $\Pb[X \le \Ev[X] - f \cdot \sigma_X] \le 2^{-\!\Dkl(x \,{\|}\, p) \nwspace n}$.
	Let $t^+ = \argmax_{t \in [x,p]} t(1-t)$.
	Lemma~\ref{lema:div_quadratic} gives
	\begin{equation*}
	 	\Dkl(x \,{\|}\, p) \cdot n
	 		\ge \frac{f(n)^2 \ p(1-p)}{t^+(1-t^+) \nwspace 2 \ln 2} = \littleOm(1). \qedhere
	\end{equation*} 
\end{proof}

\vspace*{-.75em}
\subparagraph*{Cramér's Theorem.}
Theorem~\ref{thm:better_Hoeffding} also has implications 
for the rate of convergence in Cramér's theorem in the theory of large deviations.
Fix some non-trivial probability $p$ and let $(X_i)_i$ be a sequence of i.i.d.\ 
Bernoulli variables\footnote{%
	Cram{\'e}r's theorem holds more generally for any i.i.d.\ sequence $(X_i)_i$
	such that the cumulant-generating function of $X_1$ is finite everywhere~\cite{Deuschel84LargeDeviations}.
}
with success probability $p$.
Define
\begin{equation*}
	\Lambda^*(x) =  \sup_{t \in \mathbb{R}} 
		\nwspace \Big\{ tx - \ln ( \nwspace \Ev[e^{t X_1}] \nwspace) \Big\}
\end{equation*}

\noindent
to be the Legendre transform of the cumulant-generating function of $X_1$.
Cram{\'e}r's theorem~\cite{Cramer38SurUn,Deuschel84LargeDeviations} states that the transform observes the following
limiting property for all $x$ with $p < x < 1$,
\begin{equation*}
	\lim_{n \to \infty}\  \frac{1}{n} \nwspace 
		\ln\!\left( \Pb\!\left[ \sum_{i=1}^n X_i \ge xn \right] \right)
		= -\Lambda^*(x).
\end{equation*}

\noindent
For notational convenience, let
$\Dkl_e(x \,{\|}\, y) = \ln(2) \cdot \Dkl(x \,{\|}\, y)
	= -x \nwspace \ln\!\left( \frac{y}{x} \right) - (1-x) 
		\nwspace \ln\!\left(\frac{1-y}{1-x} \right)$
denote the natural (base-$e$) Kullback--Leibler divergence.
It is straightforward to verify from $\Ev[e^{t X_1}] = 1-p + p \nwspace e^t$
that Cram{\'e}r's function $\Lambda^*(x) = \Dkl_e(x \,{\|}\, p)$
is in fact the natural divergence.
This shows that the original Chernoff--Hoeffding inequality with
$2^{-\!\Dkl(x \,{\|}\, p) \nwspace n} = e^{-\!\Dkl_e(x \,{\|}\, p) \nwspace n}$
is asymptotically tight up to
sublinear terms in the exponent.
However, the rate of convergence of the above limit is subject of ongoing
research~\cite{Deuschel84LargeDeviations,Dembo10LargeDeviations,Fill83ConvrgenceRates}.
Observe that under the current assumptions,
we have
$1 \ge 1/(x-p) \nwspace \sqrt{(1-x)n}$
for all $n$ sufficiently large.
Theorem~\ref{thm:better_Hoeffding} thus implies the following corollary, 
which further clarifies the rate in the Bernoulli case.

\begin{corollary}
\label{cor:Cramer}
	Let $0< p < 1$ be a non-trivial probability 
	and $(X_i)_{i}$ a sequence of i.i.d.\ Bernoulli variables with parameter $p$.
	Then, for any $x$ with $p < x < 1$, it holds that
	\begin{equation*}
		\frac{1}{n} \ln\!\left( \Pb\!\left[ \sum_{i=1}^n X_i \ge xn \right] \right)
			= - \Dkl_e(x \,{\|}\, p) - \frac{1}{2} \nwspace \frac{\ln n}{n} 
				- \frac{\ln(x-p)}{n} - \frac{1}{2} \nwspace \frac{\ln(1-x)}{n}
				\pm \Or\!\left(\frac{1}{n}\right)\!. 
	\end{equation*}
\end{corollary}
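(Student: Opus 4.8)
The plan is to read off the statement directly from Theorem~\ref{thm:better_Hoeffding}~($\ref{clause:thm_Hoeffding_above_mean}$). First I would observe that $\sum_{i=1}^n X_i$ is a $\Bin(n,p)$ variable, so writing $Y \sim \Bin(n,p)$ the claim concerns $\frac1n \ln \Pb[Y \ge xn]$. Next I would check that the hypothesis $n \ge 1/(1-x)$ is equivalent to $1/n \le 1-x$, i.e.\ to $x \le 1-1/n$; together with the assumption $p < x$ this is exactly the range in which Theorem~\ref{thm:better_Hoeffding}~($\ref{clause:thm_Hoeffding_above_mean}$) applies. Invoking it, and abbreviating the two prefactors as $c_1(x,p) = p\sqrt{1-x}/(2e\sqrt{2x})$ and $c_2(x,p) = \sqrt{x}/((x-p)\sqrt{\pi(1-x)})$, one gets
\begin{equation*}
	c_1(x,p) \cdot \frac{2^{-\!\Dkl(x \,{\|}\, p)\nwspace n}}{\sqrt{n}}
		\le \Pb[Y \ge xn]
		\le c_2(x,p) \cdot \frac{2^{-\!\Dkl(x \,{\|}\, p)\nwspace n}}{\sqrt{n}}.
\end{equation*}

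Then I would take natural logarithms of this two-sided bound. Using $\Dkl_e(x\,{\|}\,p) = \ln(2)\,\Dkl(x\,{\|}\,p)$, so that $2^{-\!\Dkl(x\,{\|}\,p)\nwspace n} = e^{-\!\Dkl_e(x\,{\|}\,p)\nwspace n}$, the exponential factor contributes exactly $-\Dkl_e(x\,{\|}\,p)\nwspace n$ and the $1/\sqrt n$ factor contributes $-\tfrac12\ln n$, yielding
\begin{equation*}
	\ln c_1(x,p) - \Dkl_e(x\,{\|}\,p)\nwspace n - \tfrac12\ln n
		\le \ln \Pb[Y \ge xn]
		\le \ln c_2(x,p) - \Dkl_e(x\,{\|}\,p)\nwspace n - \tfrac12\ln n.
\end{equation*}
Dividing through by $n$ places $\frac1n\ln\Pb[Y \ge xn]$ within distance $\max(|\ln c_1(x,p)|,|\ln c_2(x,p)|)/n$ of $-\Dkl_e(x\,{\|}\,p) - \tfrac12\tfrac{\ln n}{n}$. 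Since $x$ and $p$ are fixed, that maximum is a finite constant, so the deviation is $\Or(1/n)$, which is precisely the claimed identity.

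I do not expect a real obstacle here: the corollary is essentially just the logarithmic reformulation of Theorem~\ref{thm:better_Hoeffding}~($\ref{clause:thm_Hoeffding_above_mean}$). The only two points needing care are (a) verifying that the clean hypothesis $n \ge 1/(1-x)$ matches the range $p < x \le 1-1/n$ of that theorem, and (b) noting that although $c_1(x,p)$ and $c_2(x,p)$ blow up as $x \to p^+$ or $x \to 1^-$, they are finite for every admissible fixed $x$ (and $p$), so the constant hidden in $\Or(1/n)$ may depend on $x$ and $p$ but not on $n$. If desired, I would close with a one-line remark that this makes explicit the $-\tfrac12\tfrac{\ln n}{n}$ second-order correction to the exponential rate in Cramér's theorem for Bernoulli sequences, as announced before the statement.
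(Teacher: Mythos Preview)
Your proposal is correct and is exactly the intended argument: the paper presents Corollary~\ref{cor:Cramer} without proof, as an immediate consequence of Theorem~\ref{thm:better_Hoeffding}~($\ref{clause:thm_Hoeffding_above_mean}$), and your write-up simply spells out that deduction---matching the hypothesis $n \ge 1/(1-x)$ to the range $x \le 1-1/n$, taking logarithms, and absorbing the $x,p$-dependent prefactors into the $\Or(1/n)$ term.
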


\pagebreak

\section{Distinct Sets and Minimality}
\label{sec:distinct_sets_and_minimality}

We return to the main topic of this work,
which is determining the expected number of minimal edges
in the maximum-entropy multi-hypergraph $\B_{n,m,p}$.
The vertex sampling probabilities $p=0$ or $p=1$ result in trivial hypergraphs,
we thus assume $0 < p < 1$ in the remainder of this work,
unless explicitly stated otherwise.
Also $n$, $m$ always denote positive integers and $X \sim \Bin(n,p)$ a binomial variable.
Under this assumptions, every subset of $[n]$ then has a non-vanishing chance to be sampled.
Such a set is minimal for $\B_{n,m,p}$
if and only if it is generated in one of the trials
and no proper subset ever occurs.
Both of these aspects influence the chance of minimality,
but their impact varies depending on the cardinality of the sets in question.

The number of vertices per edge is heavily concentrated around $pn$
and the more vertices there are in an edge, the less likely it is minimal.
Intuitively speaking, almost no sets with very low cardinalities are sampled,
but if so, they are very likely included in the minimization.
There are plenty of edges with a medium number of vertices
and there is still a good chance they are minimal.
Sets of high cardinality rarely occur and usually
they are dominated by smaller edges.
This disparity is exacerbated by a large number of trials.
Boosting $m$ increases the probability that also sets of cardinality
a bit further away from $pn$ are sampled,
at the same time the process now generates more duplicate sets that do not count towards the minimization.
More importantly though, the likelihood of a larger set being minimal
is smaller with many trials.
Eventually, the last effect outweighs all others,
creating the situation in which the only minimal edge is empty.

In this section, we make this intuition rigorous.
We start by giving preliminary bounds on the number of minimal edges
as a first step towards the proof of Theorem~\ref{thm:binomial_characterization}.
These bounds have the form of are binomial sums of polynomials of probabilities,
depending on which factors we include in those sums, we get an upper or a lower bound.
The estimates are already tight up to constants but are rather unwieldy.
They will serve as the basis for our further analysis.

Let $\mathcal{D}_{n,p}$ denote the maximum-entropy distribution on the power set $\mathcal{P}([n])$
provided that $\Ev_{S \sim \mathcal{D}_{n,p}}[ \nwspace | S | \nwspace] = pn$,
each vertex is included independently with probability $p$.
Let $S_j \sim \mathcal{D}_{n,p}$ be the outcome of the $j$-th trial.
Define $s_{n,p}(i,m) = \Pb[\nwspace \exists j \le m \colon  S_{j} = [i] \nwspace]$
to be the probability that the set $[i]$ 
(in fact, any set of cardinality $i$) is sampled,
and $w_{n,p}(i,m) = \Pb[\nwspace \forall j \le m \colon \neg (S_{j} \subsetneq [i]) \nwspace]$
as the probability that no trial ever samples a proper subset of $[i]$.\footnote{%
	The notation $s_{n,p}$ refers to the set being \emph{sampled},
	these probabilities are then \emph{weighted} by the factors $w_{n,p}$.
}

\begin{lemma}
\label{lem:formula_expt_Bnk}
	We have $s_{n,p}(i,m) = 1 - (1 - p^i (1-p)^{n-i})^m$ and
	$w_{n,p}(i,m) = (1 -(1-p)^{n-i}(1-p^i))^m$.
	The following statements hold for the minimization of multi-hypergraph $\B_{n,m,p}$.
	\begin{enumerate}[\textup{(}$i$\textup{)}]
		\item\label{clause:lower_bound} $\Ev[ \nwspace |\min(\B_{n,m,p})|\nwspace ]
			\ge \sum_{i=0}^n \binom{n}{i} \nwspace s_{n,p}(i,m) \cdot w_{n,p}(i,m)$.
		\item\label{clause:upper_bound} $\Ev[ \nwspace |\min(\B_{n,m,p})| \nwspace ]
			\le \sum_{i=0}^n \binom{n}{i} \nwspace s_{n,p}(i,m) \cdot w_{n,p}(i, m -1)$.
		\item\label{clause:constant_factor} $\Ev[ \nwspace |\min(\B_{n,m,p})| \nwspace ]
			\le 1 + \frac{1}{p} \sum_{i=0}^n \binom{n}{i} \nwspace s_{n,p}(i,m) \cdot w_{n,p}(i,m)$.	
	\end{enumerate}
\end{lemma}

\begin{proof}
	To see the closed forms of $s_{n,p}(i,m)$ and $w_{n,p}(i,m)$,
	observe that the random set \mbox{$S_j  \sim \mathcal{D}_{n,p}$}
	differs from $[i]$ with probability $1-p^i(1-p)^{n-i}$.
	The independence of the trials gives $s_{n,p} = 1 - (1 - p^i (1-p)^{n-i})^m$.
	Similarly, $S_j$ is a subset of $[i]$
	if it does not contain an element of $[n]{\setminus}[i]$,
	having probability $(1-p)^{n-i}$,
	Conditioned on being any subset, $S_j$ is a \emph{proper} subset
	if it is missing at least one element of $[i]$.
	The expression for $w_{n,p}(i,m)$ follows from here.
	
	Regarding the main statements,
	some fixed set $S \subseteq [n]$ is in $\min(\B_{n,m,p})$
	if and only if it is sampled in one of the $m$ trials and no proper subset is sampled.
	The probability for both events depends only on $|S|$
	as all sets with the same cardinality are equally likely.
	\begin{gather*}
		\Ev[ \nwspace |\min (\B_{n,m,p})| \nwspace ]
			= \sum_{S \subseteq [n]} \Pb[ \nwspace (\exists k \le m \colon S_k = S)
				\wedge (\forall j \le m \colon \neg(S_{j} \subsetneq S))\nwspace ]\\
		\quad\quad= \sum_{i=0}^n \binom{n}{i} \Pb[ \nwspace \exists k \le m \colon S_k = [i] \nwspace]
				\cdot \Pb[ \nwspace \forall j \le m \colon \neg(S_{j} \subsetneq [i]) 
				\mid \exists k \le m \colon S_k = [i] \nwspace]\\[.5em]
		\quad\quad= \sum_{i=0}^n \binom{n}{i} \, s_{n,p}(i,m)
				\cdot \Pb[ \nwspace \forall j \le m \colon \neg(S_{j} \subsetneq [i]) 
				\mid \exists k \le m \colon S_k = [i] \nwspace].
	\end{gather*}

	The last factor describes the likelihood that any set with $i$ elements is minimal,
	conditioned on it being sampled at all.
	The stated bounds differ only in the way this factor is estimated.
	We claim it to be at least
	$\Pb[ \nwspace \forall j \le m \colon \neg(S_{j} \subsetneq [i]) \nwspace ]$
	(that is, without the condition) while at the same time being at most
	$\Pb[ \nwspace \forall j < m \colon \neg(S_{j} \subsetneq [i]) \nwspace ]$
	(with one fewer trial).
	The first inequality is obvious because conditioning on some trial 
	producing $[i]$ itself only increases the chances of never sampling a proper subset.
	For the second one, we apply Lemma~\ref{lem:znZN}
	to the events $A_j = [S_j \subsetneq [i] \nwspace]$ and $B_j = [S_j = [i] \nwspace]$,
	which shows that 
	\begin{equation*}
		\Pb[ \nwspace \forall j \le m \colon \neg(S_{j} \subsetneq [i]) 
			\mid \exists k \le m \colon S_k = [i] \nwspace ] 
			\le \Pb[ \nwspace \forall j \le m \colon \neg(S_{j} \subsetneq [i])
				\mid S_m = [i] \nwspace ].
	\end{equation*}
	
	\noindent
	The proof of the claim, and thereby the one of Statements~($\ref{clause:lower_bound}$)
	and ($\ref{clause:upper_bound}$), is completed by the fact that 
	$\Pb[ \nwspace \forall j \le m \colon \neg(S_{j} \subsetneq [i]) \mid S_m = [i] \nwspace ]$
	is equal to $\Pb[ \nwspace \forall j < m \colon \neg(S_{j} \subsetneq [i]) \nwspace ]$.

	In order to prove Statement~($\ref{clause:constant_factor}$),
	note that the $i$-th term of the sums in the first two statements
	have a relative difference of $w_{n,p}(i,m)/w_{n,p}(i,m{-}1)$.
	By independence, this is equal to $w_{n,p}(i,1)$,
	the probability to not sample a strict subset of $[i]$ in a single trial.
	For $i <n$, it is easy to see that $w_{n,p}(i,1) \ge p$.
	If $i = n$, we have $w_{n,p}(n,1) = p^n$ which is sub-constant.
	However, the statement follows anyway
	as the contribution of the last term to the whole sum is at most $1$.
\end{proof}

The part that all three bounds of Lemma~\ref{lem:formula_expt_Bnk} have in common describes
the expected number of \emph{distinct} sets in $\B_{n,m,p}$.
Recall that we use $\|\Hyp\|$ to denote the number of distinct sets of some multi-hypergraph $\Hyp$.
That means, we have
\begin{equation*}
	\Ev[ \nwspace \|\B_{n,m,p}\| \nwspace ]
		= \sum_{i=0}^n \binom{n}{i} \, s_{n,p}(i,m)
		= \sum_{i=0}^n \binom{n}{i} \Big(1 - (1 - p^i (1-p)^{n-i})^m \Big).		
\end{equation*}

\noindent
We weight the terms of the sum by $w_{n,p}(i,m)$ or $w_{n,p}(i,m {-} 1)$
to estimate the size of the minimization.
We analyze the two parts separately,
starting with the \emph{weighting factors} $w_{n,p}$.

The behavior of the $w_{n,p}$ may have applications besides our study of random hypergraphs.
Consider $m$ trials according to the maximum-entropy distribution $\mathcal{D}_{n,p}$
on subsets of $[n]$.
Then, $w_{n,p}(i,m)$ is by definition the probability that
any fixed subset of cardinality $i$ survives as minimal after $m$ trials,
equivalently, any proper subset is sampled with probability $1- w_{n,p}(i,m)$.
It is easy to see that
weighting factors are non-increasing in both $i$ and $m$.
We prove next that the weighting factors are in fact threshold functions falling abruptly
from almost $1$ to almost $0$ as $i$ increases from $0$ to $n$,
the position of the transition depends on $n$, $m$, and $p$.
Recall that $\alpha = -(\log_{1-p} m)/n$.
Lemma~\ref{lem:threshold_fct} below establishes
a sharp threshold behavior of $w_{n,p}(i,m)$ at
\begin{equation*}
	i^* = n + \log_{1-p}m = (1-\alpha) \nwspace n.
\end{equation*}

\noindent
Note that $i^*$ is always at most $n$ since $\log_{1-p} m$ is non-positive.
The definition is such that it ensures the equality $m = 1/(1-p)^{n-i^*} = 1/(1-p)^{\alpha n}$.
For increasing $m$, the threshold gets smaller relative to $n$.
Once $m$ grows beyond $1/(1-p)^n$, i.e., $\alpha > 1$,
the quantity $i^*$ can no longer be interpreted as a cardinality as it becomes negative.
Later, in Lemma~\ref{lem:char_trivial}, we will see that $m$ being this large 
is in fact irrelevant for the analysis of the minimization.

\pagebreak

\begin{lemma}
\label{lem:threshold_fct}
	It holds that \mbox{$w_{n,p}(0,m) = 1$} and $w_{n,p}(n,m) = p^{nm}$.
	Suppose $i = i(n)$ takes integer values such that $0 < i < n$, then
	\begin{enumerate}[\textup{(}$i$\textup{)}]
		\item\label{clause:exact_bounds}
			$\exp(- m \nwspace (1-p)^{n-i}) \cdot (1- m \nwspace (1-p)^{2(n-i)}) \le w_{n,p}(i,m)
			\le \exp(- m \nwspace (1-p)^{n-i+1})$.
	\end{enumerate}

	\noindent
	In particular, the following statements hold. 

	\begin{enumerate}[\textup{(}$i$\textup{)}]
	\setcounter{enumi}{1}
		\item\label{clause:above_threshold} If $i = n + \log_{1-p} m + \littleOm(1)$,
			then $\lim_{n \to \infty} w_{n,p}(i,m) = 0$.
		\item\label{clause:below_threshold} If $i = n + \log_{1-p} m - \littleOm(1)$,
			then $\lim_{n \to \infty} w_{n,p}(i,m) = 1$.
		\item\label{clause:at_threshold} If $i = n + \log_{1-p} m \pm \Th(1)$,
			then $w_{n,p}(i,m) = \Th(1)$.
	\end{enumerate}
\end{lemma}

\begin{proof}
	The corner cases are elementary.
	Assume $0 < i < n$ for the rest of the proof.
	We estimate $w_{n,p}(i,m)$ using mainly Corollary~\ref{cor:tight_bound_polynom}.
	This yields
	\begin{equation*}
		w_{n,p}(i,m) = (1- (1-p)^{n-i}(1-p^i))^{m}
			 \le(1- (1-p)^{n-i} (1-p))^m
			\le  \exp\!\Big(\!- m \nwspace (1-p)^{n-i} \cdot (1-p)\Big).
	\end{equation*}
	
	\noindent
	The limiting behavior is determined
	by the product $m \nwspace (1-p)^{n-i}\!$.
	If $i = n + \log_{1-p} m + \littleOm(1)$, then
	$m \nwspace (1-p)^{n-i}
		= m \nwspace (1-p)^{- (\log_{1-p} m) - \littleOm(1)}
		= (1-p)^{-\omega(1)}$	
	diverges and the weighting factor $w_{n,p}(i,m)$ converges to $0$.
	Conversely, we get from $1-p^i \le 1$ that
	\begin{equation*}
		w_{n,p}(i,m) \ge (1- (1-p)^{n-i})^m 
		\ge \exp\!\Big(\!- m \nwspace (1-p)^{n-i} \Big) \cdot (1- m \nwspace (1-p)^{2(n-i)}).
	\end{equation*}
	
	\noindent
	If $i = n + \log_{1-p} m - \littleOm(1)$, both $m \nwspace (1-p)^{n-i} = (1-p)^{\littleOm(1)}$
	and $m \nwspace (1-p)^{2(n-i)} = (1-p)^{\littleOm(1)}/m$ tend to $0$,
	implying $\lim_{n \to \infty} w_{n,p}(i,m) = 1$.
	
	Finally, if the cardinality $i$ is close to the threshold $i^* = n + \log_{1-p} m$,
	the limit may not exist. We show that $w_{n,p}(i,m)$ is still bounded away from $0$ for all $n$.
	Suppose $i = n + \log_{1-p} m \pm \Th(1)$, in particular, the difference $i^*-i$ is bounded.
	If $m$ is constant with respect to $n$, so is $w_{n,p}(i,m) \ge (1- (1-p)^{n-i})^m \ge p^m$.
	Here, we used the assumption $i < n$.
	If $m$ diverges, then $n-i = \log_{1-p} m \mp \Th(1) = \littleOm(1)$ diverges with it
	and thus
	\begin{align*}
		w_{n,p}(i,m)
			&\ge \exp\!\Big(\!- m \nwspace (1-p)^{n-i} \Big) (1- m \nwspace (1-p)^{2(n-i)})\\
			&= \exp\!\Big(\!- (1-p)^{i^*-i} \Big) \cdot (1- (1-p)^{ (i^*-i) + (n-i)})
			 = \Om(1).\qedhere
	\end{align*}
\end{proof}

After demonstrating a sharp threshold for the weighting factors,
we turn to the number of distinct sets $\|\B_{n,m,p}\|$.
A trivial cap for distinct sets is the total number of edges $m$.
When starting the sampling, many different sets are generated
and $\|\B_{n,m,p}\|$ is close to $m$.
As the number of trials increases though, duplicates occur in the sample
and the two quantities grow apart.
To discuss this in more detail, we introduce some notation.
For a pair of integers $\ell,\,u$ with $0 \le \ell \le u \le n$,
let $\| \B_{n,m,p}(\ell,u) \|$ denote the number of distinct samples whose cardinality is
between $\ell$ and $u$, including.
This is also at most as large as the total number of samples in that range.
It thus makes sense to expect an upper bound in terms of the binomial distribution.
We confirm this below and further prove that there is also a lower bound of the same flavor.

\begin{lemma}
\label{lem:distinct_sets}
	Let $\ell,\,u$ be integers such that $0 \le \ell \le u \le n$ and
	$\p = \max_{\ell \le i \le u} \nwspace \{ \nwspace p^i (1-p)^{n-i} \nwspace \}$,
	then
	\begin{equation*}
		\p = \begin{cases}
				p^\ell (1-p)^{n- \ell}, &\text{if } p<1/2;\\
				1/2^n,					&\text{if } p=1/2;\\
				p^u (1-p)^{n- u},		&\text{otherwise.}
			\end{cases}
	\end{equation*}

	\noindent
	The expected number of distinct sets in $\B_{n,m,p}$
	with cardinality between $\ell$ and $u$ observes
	\begin{equation*}
		\frac{m}{1+ m \p} \cdot \Pb[ \nwspace \ell \le X \le u \nwspace ]
			\le \Ev[ \nwspace \| \B_{n,m,p}(\ell,u) \| \nwspace ]
			\le m \cdot \Pb[ \nwspace \ell \le X \le u \nwspace ].
	\end{equation*}
\end{lemma}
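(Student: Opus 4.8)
The plan is to write the expected count as a sum over all subsets of the admissible cardinalities via linearity of expectation, and then to sandwich each summand using the elementary two-sided estimate on $1-(1-x)^m$ from Lemma~\ref{lem:Badkobeh_bound}. The closed form for $\p$ comes first, and it is purely arithmetical: for a fixed $i$-element set the single-trial generation probability is $q_i = p^i(1-p)^{n-i}$, and $q_{i+1}/q_i = p/(1-p)$. Hence $q_i$ is strictly decreasing in $i$ when $p<1/2$, identically $1/2^n$ when $p=1/2$, and strictly increasing when $p>1/2$; maximizing over $\ell\le i\le u$ yields the three stated cases.

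Next I would set up the main identity. A fixed set $S$ with $|S|=i$ occurs among the $m$ i.i.d.\ samples of $\B_{n,m,p}$ if and only if at least one trial produces it; since a single trial fails to produce $S$ with probability $1-q_i$, this event has probability $1-(1-q_i)^m$, depending only on $i$. By linearity of expectation over the $\binom{n}{i}$ sets of each cardinality,
\begin{equation*}
	\Ev[\, \| \B_{n,m,p}(\ell,u) \| \,] \;=\; \sum_{i=\ell}^{u} \binom{n}{i}\bigl(1-(1-q_i)^m\bigr).
\end{equation*}

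Finally I would apply Lemma~\ref{lem:Badkobeh_bound} with $x=q_i$, giving $\frac{m\,q_i}{1+m\,q_i} \le 1-(1-q_i)^m \le m\,q_i$. For the upper bound, $\sum_{i=\ell}^{u}\binom{n}{i} m\,q_i = m\sum_{i=\ell}^{u}\binom{n}{i}p^i(1-p)^{n-i} = m\cdot\Pb[\,\ell\le Y\le u\,]$. For the lower bound, I would use $q_i\le\p$ to deduce $1+m\,q_i\le 1+m\,\p$ and therefore $\frac{m\,q_i}{1+m\,q_i}\ge\frac{m\,q_i}{1+m\,\p}$ uniformly in $i$; summing then produces $\frac{m}{1+m\,\p}\cdot\Pb[\,\ell\le Y\le u\,]$. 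The only step needing any care is this uniform replacement of $1+m\,q_i$ by $1+m\,\p$ in the denominator of the lower bound, which is precisely the reason the quantity $\p$ and its explicit characterization are introduced; everything else is a routine combination of linearity of expectation with the off-the-shelf inequalities for $1-(1-x)^m$.
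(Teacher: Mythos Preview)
Your proposal is correct and matches the paper's proof essentially step for step: the monotonicity of $q_i$ via the ratio $p/(1-p)$, the identity $\Ev[\,\|\B_{n,m,p}(\ell,u)\|\,]=\sum_{i=\ell}^{u}\binom{n}{i}(1-(1-q_i)^m)$, and the two-sided application of Lemma~\ref{lem:Badkobeh_bound} with the uniform denominator replacement $1+m\,q_i\le 1+m\,\p$ for the lower bound. There is nothing to add.
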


\begin{proof}
	The closed form for $\p$ can be seen from the equality
	$p^i(1-p)^{n-i} = (p/(1-p))^i \cdot (1-p)^n$
	since the odds $p/(1-p)$ are strictly smaller than $1$
	iff $p < 1/2$.
	For the number of distinct sets in $\B_{n,m,p}(\ell,u)$,
	Lemma~\ref{lem:Badkobeh_bound} implies that
	\begin{equation*}
		\Ev[\nwspace \| \B_{n,m,p}(\ell,u) \| \nwspace] 
			= \sum_{i=\ell}^u \binom{n}{i} (1 - (1 - p^i (1-p)^{n-i})^m )
			\le m \cdot \sum_{i= \ell}^u \binom{n}{i} p^i (1-p)^{n-i}.
	\end{equation*}
	
	\noindent
	Conversely, we have
	\begin{equation*}
		\Ev[\nwspace \| \B_{n,m,p}(\ell,u) \| \nwspace] \ge \sum_{i= \ell}^u \binom{n}{i} \frac{m \nwspace p^i (1-p)^{n-i}}{1+ m \nwspace p^i (1-p)^{n-i}}
			\ge \frac{m}{1+ m \p} \cdot \sum_{i= \ell}^u \binom{n}{i} p^i (1-p)^{n-i}. \qedhere
	\end{equation*}
\end{proof}

\section{Size of the Minimization}
\label{sec:proof_of_main_thm}

We now prove the main theorems with the help of the tools above.
The key observation is that the minimization is dominated by the sets with cardinalities 
around $i^* = n + \log_{1-p} m = (1-\alpha) n$.

\subsection{Binomial Characterization}
\label{subsec:proof_binomial}

Theorem~\ref{thm:binomial_characterization} establishes a close connection between
the expected size $\Ev[\nwspace |\min(\B_{n,m,p})| \nwspace ]$ of the minimization
and the binomial distribution.
We split its proof into three lemmas corresponding, in that order,
to the lower bound in the first statement of the theorem,
the tight upper bound, and the second statement.

Note that the following lemma is slightly more general than what is claimed in 
Theorem~\ref{thm:binomial_characterization}~(\emph{\ref{clause:char_thm_almostall}}).
It pertains to all $m \le 1/(1-p)^n$,
equivalently, it does not require $\alpha$ to be bounded away from $1$.

\begin{lemma}[Lower bound of Theorem~\ref{thm:binomial_characterization}~(\emph{\ref{clause:char_thm_almostall}})]
\label{lem:char_almostall_lower}
	For all $m \le 1/(1-p)^{n}$,
	i.e., all $0 \le \alpha \le 1$, it holds that
	$\Ev[ \nwspace |\min(\B_{n,m,p})| \nwspace ] 
			= \Om(m) \cdot \Pb[ \nwspace X \le i^* \nwspace]$.
\end{lemma}

\begin{proof}
	The sought expectation is at least as large as
	the number of distinct minimal sets up to some cardinality $i$,
	for an arbitrary choice of $i \le n$.
	As an ansatz, we set it equal to the threshold
	$i^* = n + \log_{1-p} m = (1-\alpha) n$.
	Without loosing generality, $i^*$ is an integer;
	otherwise, we take $\lfloor i^* \rfloor$.
	Note that the assumption $m \le 1/(1-p)^n$ ensures $i^* \ge 0$.
	Let $\p = \max_{0 \le i \le i^*} \{ \nwspace p^i (1-p)^{n-i} \nwspace \}$.
	Lemmas~\ref{lem:formula_expt_Bnk}~($\ref{clause:lower_bound}$)
	and \ref{lem:distinct_sets} imply
	\begin{align*}
		\Ev[\nwspace |\min(\B_{n,m,p})| \nwspace]
			&\ge \sum_{i=0}^{i^*} \binom{n}{i} \, s_{n,p}(i,m) \cdot w_{n,p}(i,m)
			 \ge \sum_{i=0}^{i^*} \binom{n}{i} \, s_{n,p}(i,m) \cdot w_{n,p}(i^*,m)\\
			&= \Ev[\nwspace \|\B_{n,m,p} (0,i^*)\| \nwspace] \cdot w_{n,p}(i^*,m)
			 \ge \frac{1}{1+ m \p} \cdot w_{n,p}(i^*,m)  
				\cdot m \Pb[\nwspace X \le i^* \nwspace].
	\end{align*}
	
	To complete the proof, we verify that the first two factors do not vanish.
	The first factor is immediate from the closed form for $\p$
	(Lemma~\ref{lem:distinct_sets}) since
	$m\p = \max_{0 \le i \le i^*} \{ \nwspace m p^i (1-p)^{n-i} \nwspace \}
			\le \max \{1, p^{i^*}\} = 1$.
	Lemma~\ref{lem:threshold_fct}~($\ref{clause:exact_bounds}$)
	shows that there exists a universal constant $\delta > 0$ for all $m \le 1/(1-p)^n$
	such that $w(i^*,m) \ge \delta$.	
\end{proof}

\begin{lemma}[Upper bound of Theorem~\ref{thm:binomial_characterization}~(\emph{\ref{clause:char_thm_almostall}})]
\label{lem:char_almostall_upper}
	For any $\varepsilon > 0$ and all $m \le 1/(1-p)^{(1-\varepsilon) n}$, 
	i.e., all $0 \le \alpha \le 1 - \varepsilon$,
	$\Ev[ \nwspace |\min(\B_{n,m,p})| \nwspace ] 
	= \Or(m) \cdot \Pb[ \nwspace X \le i^* \nwspace]$.
	The leading constant may depend on $\varepsilon$.
\end{lemma}

\begin{proof}
	We know that
	$\Ev[\nwspace |\min(\B_{n,m,p})| \nwspace]
	\le \sum_{i=0}^n \binom{n}{i} s_{n,p}(i,m) \cdot w_{n,p}(i, m -1)$.
	We split the sum at the threshold $i^* = (1-\alpha)n$ and handle the two parts separately.
	The assumption on $m$ is such that $i^* \ge \varepsilon n$.
	For the first part,
	Lemma~\ref{lem:distinct_sets} implies
	\begin{equation*}
		\sum_{i = 0}^{i^*} \binom{n}{i} s_{n,p}(i,m) \cdot w_{n,p}(i,m -1)
			\le \Ev[ \nwspace \| \B_{n,m,p} (0,i^*) \| \nwspace ]
			\le m \cdot \Pb[\nwspace X \le i^* \nwspace].
	\end{equation*}
	
	For cardinalities larger than $i^*$,
	we can no longer ignore the influence of the weighting factors.
	We show that the whole second part of the sum is within constant factors
	of $m \cdot \Pb[\nwspace X = i^* \nwspace]$.
	Let $\ell \le n - i^*$ be a positive integer.
	Consider the $(i^*{+}\ell)$-th term.
	If $\ell = n - i^*$ this is the last one and contributes at most $1$. 
	Assume $\ell < n - i^*$.
	As in the proof of Lemma~\ref{lem:distinct_sets},
	we see that the term is upper-bounded by
	$m \cdot \Pb[\nwspace X = i^* + \ell \nwspace] \cdot w_{n,p}(i^* {+} \ell, m)$.
	Dividing by $m \cdot \Pb[\nwspace X = i^* \nwspace]$ gives
	\begin{align*}
		\frac{\binom{n}{i^*+\ell} \nwspace s_{n,p}(i^* + \ell,m) \cdot w_{n,p}(i^* + \ell,m -1)}
			{m \nwspace \Pb[\nwspace X = i^* \nwspace]}
			&\le \frac{m \Pb[\nwspace X = i^* + \ell \nwspace]}{m \Pb[\nwspace X = i^* \nwspace]}
				\cdot w_{n,p}(i^* + \ell,m -1)\\
			&= \frac{\binom{n}{i^*+\ell}}{\binom{n}{i^*}} \left(\frac{p}{1-p}\right)^\ell
				w_{n,p}(i^* + \ell,m -1).
	\end{align*}

	\noindent
	The first factor is bounded for any fixed $\ell$, using $\alpha \le 1 -\varepsilon$.
	\begin{equation*}
		\frac{\binom{n}{i^*+\ell}}{\binom{n}{i^*}}
			= \prod_{j=1}^\ell \frac{n - i^*- j+1}{i^* + j}
			\le \left(\frac{n - i^*}{i^*}\right)^\ell
			= \left(\frac{\alpha}{1-\alpha}\right)^\ell
			\le \frac{1}{\varepsilon^\ell}.
	\end{equation*}
	
	\noindent
	Applying Lemma~\ref{lem:threshold_fct}~($\ref{clause:exact_bounds}$)
	and the same idea as in Lemma~\ref{lem:formula_expt_Bnk}~($\ref{clause:constant_factor}$)
	to the weighting factor yields
	\begin{align*}
		w_{n,p}(i^*+\ell, m -1) 
			&= \frac{w_{n,p}(i^*+\ell, m-1)}{w_{n,p}(i^*+\ell, m)} \cdot w_{n,p}(i^*+\ell, m)\\
			&\le \frac{1}{p} \nwspace \exp(-m \nwspace (1-p)^{n - i^* -\ell+1})
			 = \frac{1}{p} \nwspace \exp(- (1-p)^{-\ell+1}).
	\end{align*}
	
	\noindent
	Here, we used $\ell < n - i^*$,
	ensuring that the ratio between subsequent factors is a constant.
	
	Define $a = p/\varepsilon (1-p)$ and $b = 1/(1-p)$.
	So far, we have established that the ratio between the term at $i^*+\ell$
	and $m \cdot \Pb[ \nwspace X = i^* \nwspace]$ is at most
	\begin{equation*}
		r(\ell) = \frac{1}{p} \nwspace \frac{a^\ell}{\exp(b^{\ell-1})}.
	\end{equation*}
	
	\noindent
	The function $r$ itself is free of any dependence on $n$, $m$, or $\alpha$,
	but in order to prove our claim, 
	we need to bound the sum $1 + \sum_{\ell = 1}^{n - i^*-1} r(\ell)$.
	We show that the series
	$\sum_{\ell = 1}^\infty r(\ell)$ is in fact summable.

	To this end,
	consider the sequence $t(\ell) = r(\ell) \cdot 2^\ell$.
	Its logarithm $\ln t  = \ell \nwspace \ln (2a) - \ln p - b^{\ell-1}$ 
	diverges to $- \infty$ as $\ell$ increases, implying $t \to 0$.
	This means, there exists an $\ell_0$ such that $r(\ell) \le 2^{-\ell}$ for all $\ell \ge \ell_0$.
	\begin{equation*}
		\sum_{\ell = 1}^\infty r(\ell)
			\le \sum_{\ell = 1}^{\ell_0} \frac{a^\ell}{p \cdot \exp(b^{\ell-1})} 
				+ \sum_{\ell = \ell_0 + 1}^\infty \frac{1}{2^\ell}
			\le \sum_{\ell = 1}^{\ell_0} \frac{a^\ell}{p \cdot \exp(b^{\ell-1})} + 2 = \Or(1). \qedhere
	\end{equation*}
\end{proof}

Finally, we show that once $m$ is a polynomial factor larger than $1/(1-p)^n$,
the minimization essentially consists of only a single edge, the empty set.

\begin{lemma}[Theorem~\ref{thm:binomial_characterization}~(\emph{\ref{clause:char_thm_trivial_case}})]
\label{lem:char_trivial}
	There exists a constant $c>0$, possibly dependent on $p$,
	such that for all $m \ge 1/(1-p)^{n+ c \ln n}$, it holds that
	$1 \le \Ev[ \nwspace |\min(\B_{n,m,p})| \nwspace ] = 1 + \littleO(1)$.
\end{lemma}

\begin{proof}
	The lower bound is immediate.
	Suppose we have $m = 1/(1-p)^{n + f}$ for some non-negative function $f = f(n)$.
	As soon as the empty set is sampled in one of the $m$ trials,
	the minimization of $\B_{n,m,p}$ comprises only a single set;
	otherwise, we fall back to the trivial estimate $|\min(\B_{n,m,p})| \le m$.
	Let $A$ denote the event $[\emptyset \in \B_{n,m,p}]$.
	The law of total expectation together with
	Corollary~\ref{cor:tight_bound_polynom} implies that
	\begin{align*}
		\Ev[\nwspace |\min(\B_{n,m,p})| \nwspace]
			&= \Ev[\nwspace |\min(\B_{n,m,p})| \mid A] \cdot \Pb[A] 
				+ \Ev[\nwspace |\min(\B_{n,m,p})| \mid \neg A] \cdot \Pb[\neg A]\\
			&\le \Pb[A] + m \cdot (1-(1-p)^{n})^{m}
			 \le 1 + \exp\!\left(\ln m - m \nwspace (1-p)^n \right)\\
			&= 1+ \exp\!\left(\ln m - (1-p)^{-f} \right).
	\end{align*}
	
	\noindent
	We have $\ln m = - \ln(1-p)(n + f)$, where $-\ln(1-p)$ is a constant strictly larger than $1$.
	Requiring $f \ge c' \nwspace (\ln n)/(- \ln(1-p))$ for some arbitrary $c' > 1$ ensures that
	$\ln m$ is negligible compared to $(1-p)^{-f}$ and the expression above converges to $1$.
	Setting $c = -c'/\ln(1-p)$ gives the lemma.
\end{proof}

\subsection{The Case $m = 1/(1-p)^n$}
\label{subsec:proof_special_case}

There is a mismatch in the upper and lower bounds above
in the range of parameters for which they hold.
The lower bound has been shown for the full range of \mbox{$m = 1/(1-p)^{\alpha n} \le 1/(1-p)^n$},
this includes the case where the function $\alpha(n)$ converges to $1$.
We get from Lemma~\ref{lem:char_almostall_lower} that
$\Ev[\nwspace |\min(\B_{n,m,p})| \nwspace] = \Om(1)$ 
if $\alpha = 1$ (for all $n$ sufficiently large).
This is not very surprising as already the the definition
of $\min(\B_{n,m,p})$ guarantees the existence of at least one minimal edge.
The upper bound in Lemma~\ref{lem:char_almostall_upper} is more interesting.
It only has been proven for $\alpha \le 1 - \varepsilon$ for any $\varepsilon > 0$.
If instead we were to insert, say, $\alpha = 1- \littleO(\sfrac{1}{n})$,
which corresponds to $m = 1/(1-p)^{n - \littleO(1)}$,
we would get $\Pb[\nwspace X \le (1-\alpha)n \nwspace] = \Pb[\nwspace X = 0 \nwspace] = (1-p)^n$
and the result would default to $\Or(1)$.
We refute this below.
Contrarily, we prove a lower bound for $\alpha = 1$ 
that is stronger than the inherited one.
This shows that the binomial characterization breaks down
if $\alpha$ converges to $1$.

\begin{lemma}
\label{lem:cannot_extend}
	For $m = 1/(1-p)^n$, i.e., $\alpha = 1$, it holds that 
	$\Ev[\nwspace |\min(\B_{n,m,p})| \nwspace] = \Om(n)$.
\end{lemma}

\begin{proof}
	We extend the idea of the proof of Lemma~\ref{lem:char_trivial}.
	If none of the $m$ trials produces the empty set,
	then all distinct sampled singletons are minimal.
	\begin{equation*}
		\Ev[\nwspace |\min(\B_{n,m,p})| \nwspace] 
			\ge \Ev[\nwspace \|\B_{n,m,p}(1,1)\| \nwspace] \cdot \Pb[ \emptyset \notin \B_{n,m,p}]
			= n \cdot s_{n,p}(1,m) \cdot (1- s_{n,p}(0,m)).
	\end{equation*}
	
	\noindent
	We show that the product $s_{n,p}(1,m) \nwspace (1- s_{n,p}(0,m))$
	is bounded away from $0$ for all $n$.
	From Corollary~\ref{cor:tight_bound_polynom}, Lemma~\ref{lem:Badkobeh_bound}, and the assumption $m = 1/(1-p)^n$,
	it follows that
	\begin{equation*}
		s_{n,p}(1,m)
			= 1-(1 - p(1-p)^{n-1})^m \ge \frac{m \nwspace p (1-p)^{n-1}}{1 + m \nwspace p (1-p)^{n-1}}
			= p.
	\end{equation*}
	\noindent
	For the second factor, we have
	\begin{equation*}
		1- s_{n,p}(0,m) = (1 - (1-p)^{n})^m
			\ge \exp(- m \nwspace (1-p)^n) \cdot (1- m \nwspace (1-p)^{2n})
			= \frac{1- (1-p)^{n}}{e}.\ \ \qedhere
	\end{equation*}
\end{proof}

Lemmas~\ref{lem:char_trivial} and \ref{lem:cannot_extend} together 
show that a slight polynomial increase (in $n$) of the number of trials beyond $1/(1-p)^n$
is enough to push the size of the minimization from at least linear to $1$.
We leave it as an open problem to give exact bounds for the transitional behavior of
$|\min(\B_{n,m,p})|$ around $m = 1/(1-p)^n$, i.e., $\alpha = 1$.

\subsection{Phase Transition at $m = 1/(1-p)^{(1-p)n}$}
\label{subsec:proof_phase_transition}

We show that the size of the minimization undergoes a phase transition at $m^* = 1/(1-p)^{(1-p)n}$.
This is made explicit in Theorem~\ref{thm:properties_Bnpk} (restated below)
and illustrated in Figure~\ref{fig:plots}.
Intuitively, for a small number of edges, the ratio of minimal edges among all edges is constant
and thus $|\min(\B_{n,m,p})|$ scales linearly with $m$.
At the transition point, both the size of the multi-hypergraph as well as its minimization
are of order $m^* = 1/(1-p)^{(1-p)n} = 2^{(\Hb(1-p) + p \ld p) \nwspace n}$.
This overlap is indicated in Figure~\ref{fig:plots-by-alpha} by dashed lines.
Beyond that, in the information-theoretic regime,
the size of the minimization is instead given by the perplexity of $\alpha$,
it follows $2^{(\Hb(\alpha) + (1-\alpha) \ld p) \nwspace n}$ up to polynomial factors.
To gauge the transition more accurately,
we reuse the correction term $\varphi = \varphi(n,p,1-\alpha)$
from Section~\ref{sec:Chernoff--Hoeffding}.
Close to the transition, for $\alpha \approx 1-p$, we have $\varphi \approx 1$.
As $\alpha$ moves further away, $\varphi$ is shrinking
and once there is a constant additive gap between $\alpha$ and $1-p$
it is of order $\varphi = \Th(\sfrac{1}{\sqrt{n}})$.
In summary, our results for the information-theoretic regime show that
the growth of $|\min(\B_{n,m,p})|$ continues at first after the transition,
but is now only sublinear in $m$.
The minimization peaks at $m = 1/(1-p)^{\frac{n}{1+p}}$ (see Lemma~\ref{lem:maximum_Bnpk})
and for larger $m$, the number of minimal edges is even falling exponentially,
although the number of trials further increases.
Once $m$ exceeds $1/(1-p)^n$, the minimization collapses under the sheer likelihood
of the empty set being sampled.

\begin{backInTime}{thm:properties_Bnpk}
\begin{theorem}[restated]
	Let  $n$, $m$ be positive integers and $0 < p < 1$ a non-trivial probability.
	\begin{enumerate}[\textup{(}$i$\textup{)}]
		\item
			If $m \le 1/(1-p)^{(1-p)n}$,
			then $\Ev[ \nwspace |\min(\B_{n,m,p})| \nwspace] = \Th(m)$.\vspace*{.5em}
		\item
			For any $\varepsilon > 0$ and all 
			$1/(1-p)^{(1-p)n} \le m \le 1/(1-p)^{(1-\varepsilon) n}$,
			i.e., all $1-p \le \alpha \le 1 - \varepsilon$,
			\begin{align*}
				\Ev[ \nwspace |\min(\B_{n,m,p})| \nwspace ]
					&= \Th(1) \cdot \min\!\left(1, 
						\, \frac{1}{(\alpha\,{-}\,(1-p)) \nwspace \sqrt{(1\,{-}\,\alpha) n}}\right) 
	 					\cdot 2^{(\Hb(\alpha) + (1-\alpha) \log_2 p)  \nwspace n}\\[.25em]
					&= \Th(1) \cdot \min\!\left(1, 
						\, \frac{1}{(\alpha\,{-}\,(1-p)) \nwspace \sqrt{(1\,{-}\,\alpha) n}}\right) 
						\cdot \left(\frac{p^{1-\alpha}}
						{ (1-\alpha)^{1-\alpha} \ \alpha^\alpha} \right)^n\!.
			\end{align*}
	\end{enumerate}
\end{theorem}
\end{backInTime}

\begin{proof}
	The first statement covers the linear regime of $m \le 1/(1-p)^{(1-p)n}$.
	By the binomial characterization in Theorem~\ref{thm:binomial_characterization},
	we have $\Ev[\nwspace |\min(\B_{n,m,p})| \nwspace] = \Th(m) \cdot \Pb[X \le i^* \nwspace]$,
	with $X \sim \Bin(n,p)$.
	It is thus enough to verify that $\Pb[X \le i^*] = \Pb[X \le (1-\alpha)n \nwspace]$
	does not converge to $0$.
	This follows easily from $\alpha \le 1-p$ and $pn$ being the median of the binomial distribution.

	In the remainder, we treat the information-theoretic regime of all $m$ such that
	$1/(1-p)^{(1-p)} \le m \le 1/(1-p)^{(1-\varepsilon) n}$ for some fixed $\varepsilon > 0$.
	In particular, $(1-\alpha)n$ is now smaller than $\Ev[X]$.
	Recall the definition of function $\varphi$
	from Section~\ref{sec:Chernoff--Hoeffding},
	inserting $1-\alpha$ for $x$ gives
	\begin{equation*}
		\varphi = \varphi(n,p, 1\,{-}\,\alpha) = \min\!\left( 1, \,\frac{1}{(\alpha - (1-p)) \nwspace \sqrt{(1-\alpha) n}} \right)\!.
	\end{equation*}

	\noindent
	Our improved Chernoff--Hoeffding bound (Theorem~\ref{thm:better_Hoeffding}) implies
	$\Ev[\nwspace |\min(\B_{n,m,p})| \nwspace] 
		= \Th(m) \cdot \varphi \cdot 2^{-\!\Dkl(1-\alpha \,{\|}\, p) \nwspace n}$.
	Expressing the (the power of) the divergence in terms of the perplexity,
	and using $\Hb(1-\alpha) = \Hb(\alpha)$ as well as $m = 1/(1-p)^{\alpha n}$,
	shows that this is equal to
	\begin{align*}
		\Th(1) \cdot \varphi \cdot m \ 2^{-\!\Dkl(1-\alpha \,{\|}\, p) \nwspace n}
			&= \Th(1) \cdot \varphi \cdot \frac{1}{(1-p)^{\alpha n}} \ 2^{\Hb(1-\alpha)n}
				\ p^{(1-\alpha) n} \, (1-p)^{\alpha n}\\
			&= \Th(1) \cdot \varphi \cdot 2^{(\Hb(\alpha) + (1-\alpha) \ld p) \nwspace n}.\qedhere
	\end{align*}
\end{proof}

The behavior of the minimization for increasing $m$ suggests
that there is a sweet spot in the information-theoretic regime
where the expected number of minimal edges is maximum.
We apply Theorem~\ref{thm:properties_Bnpk} to calculate this maximum.

\begin{lemma}
\label{lem:maximum_Bnpk}
	The maximum expected number of minimal edges is
	$\max_{m \ge 1} \ \Ev[ \nwspace |\min(\B_{n,m,p})| \nwspace ]
		= \Th\!\left((1+p)^n/\sqrt{n}\right)$,
	it is attained for $m = 1/(1-p)^{\frac{n}{1+p}}$.\vspace*{.5em}
\end{lemma}

\begin{proof}
	We first verify that
	that the maximum indeed sits in the information-theoretic regime. 
	Observe that  $1/(1-p)^{1-p} < 1+p$ holds for all non-trivial probabilities $p$.
	This can be seen from $1/(1-p)^{1-p}$ being strictly concave
	on the open unit interval and $1+p$ being its tangent line at $p=0$.
	Theorem~\ref{thm:properties_Bnpk}~($\ref{clause:main_thm_linear_range}$) shows that
	the sample sizes in the linear regime are too small to lead to the claimed bound 
	of $\Th^*((1+p)^n)$.
	Moreover, $2^{\Hb(\alpha) + (1-\alpha) \ld p}$ as a function of $\alpha$ is continuous and concave,
	it converges to $2^{\Hb(1-p) + p \ld p} = 1/(1-p)^{(1-p)}$
	as $\alpha \searrow 1-p$ (for any $p$).
	Hence, there exists some $\delta > 0$ small enough such that
	$2^{\Hb(1-p + \delta) + (p-\delta) \ld p} < 1 + p$.
	Similarly, the function converges to $1$ from above as $\alpha \nearrow 1$.
	Let $\varepsilon > 0$ be such that
	$2^{\Hb(1-\varepsilon) + \varepsilon \ld p} < 1 + p$.
	In other words, any bound exponential in $1\,{+}\,p$ must come from an
	$\alpha  \in [1\,{-}\,p\,{+}\,\delta, 1\,{-}\,\varepsilon]$,
	we are in the setting of
	Theorem~\ref{thm:properties_Bnpk}~($\ref{clause:main_thm_information-theoretic_range}$).
	
	There are suitable constants $C,C' >0$ such that
	\begin{equation*}
		C \cdot \varphi \cdot 2^{(\Hb(\alpha) + (1-\alpha) \ld p) \nwspace n}
			\le \Ev[ \nwspace |\min(\B_{n,m,p})| \nwspace ]
			\le C' \cdot \varphi \cdot 2^{(\Hb(\alpha) + (1-\alpha) \ld p) \nwspace n}.
	\end{equation*}
	
	\noindent
	As $\alpha$ is bounded away from both $1-p$ and $1$, we have
	$\varphi = \Th(\sfrac{1}{\sqrt{n}})$
	and the hidden constants are independent of $\alpha$
	(but possibly depending on $\delta$ and $\varepsilon$).
	We are thus left to determine the extremum of the exponential part.
	Let 
	\begin{equation*}
		g(\alpha,p) 
			= \Hb(\alpha) + (1-\alpha) \nwspace \ld p 
			= -\alpha \ld \alpha - (1-\alpha) \ld (1-\alpha) + (1-\alpha) \ld p
	\end{equation*}
	be the exponent to base $2^n$ of the expression above.
	Its partial derivative 
 	\begin{equation*}
	 	\frac{\partial}{\partial \alpha} \nwspace g(\alpha,p)
	 		= \ld\!\left( \frac{1-\alpha}{\alpha} \right) - \ld p
	 \end{equation*} 
	
	\noindent
	has a single zero in the interval $[1-p+\delta, 1-\varepsilon]$ at $\alpha^* = 1/(1+p)$,
	resulting in an exponent of
	\begin{equation*}
		g(\alpha^*\!, p)
			= - \frac{1}{1+p} \nwspace \ld\!\left(\frac{1}{1+p}\right) 
				- \frac{p}{1+p} \nwspace \ld\!\left(\frac{p}{1+p}\right)
				+ \frac{p}{1+p} \nwspace \ld p
			= \ld(1+p). \qedhere
	\end{equation*}
\end{proof}

\section{Conclusion}
\label{sec:conclusion}

We calculated the number of minimal edges of maximum-entropy multi-hypergraphs 
with expected edge size $pn$
and discovered a phase transition with respect to the total number of edges.
As long as $m$ is at most $m^* = 1 /(1 - p)^{(1 - p)n}$,
the expected size of the minimization is linear in $m$.
Once the sample size increases beyond that point,
the minimization is instead governed by the entropy of $\alpha$
where $m = 1/(1-p)^{\alpha n}$.
The minimization continues to grow sublinearly 
until $m$ reaches $1/(1 - p)^ {\frac{n}{1+p}}$,
from there on, it decays rapidly. 
Raising $m$ above $1/(1 - p)^{n}$
finally ensures that only the empty set is minimal.

The Chernoff--Hoeffding theorem played an integral role in verifying these results.
We tightened the tail bounds on the binomial distribution 
and provided explicit upper and lower bounds on the constants involved.
We are convinced that this improvement can help other researchers in probability
well beyond the scope of this paper.

A possible extension of our work would be to generalize Theorem~\ref{thm:binomial_characterization}
to the case where $\alpha$ converges to $1$ as $n$ increases.
It is also interesting to investigate hypergraph models 
for which the sampling probability $p$ is no longer constant, but tends to $0$.
This would bring the study of random hypergraphs closer to that of sparse random graphs.
Finally, a promising direction in the light of our original motivation of relational databases
is to allow different sample probabilities per vertex as well as
dependencies between the elements.
This requires the maximum-entropy model to incorporate additional constraints.

\subsection*{Acknowledgments}

The authors thank Benjamin Doerr, Timo Kötzing, and Martin Krejca
for the fruitful discussions on the Chernoff--Hoeffding theorem,
including valuable pointers to the literature.

\bibliography{../literature}
\end{document}